\newcommand{\hindent}[1][1]{\hspace{#1\algorithmicindent}}
\title{Aggregate Estimations over Location Based Services}
\author{ Weimo Liu$^\dag$, Md Farhadur Rahman$^\ddag$, Saravanan Thirumuruganathan$^\ddag$, Nan Zhang$^\dag$, Gautam Das$^\ddag$\\
\affaddr{The George Washington University$^\dag$, University of Texas at Arlington$^\ddag$}
}
\begin{document}
\maketitle

\begin{abstract}
{\em Location based services} (LBS) have become very popular in recent years. They range from map services (e.g., Google Maps) that store geographic locations of points of interests, to online social networks (e.g., WeChat, Sina Weibo, FourSquare) that leverage user geographic locations to enable various recommendation functions.
The public query interfaces of these services may be abstractly modeled as a $k$NN interface
over a database of two dimensional points on a plane:  given an arbitrary query point, the system returns the $k$  points in the database that are nearest to
the query point.
 In this paper we consider the problem of obtaining {\em approximate estimates} of SUM and COUNT aggregates by only
querying such databases via their restrictive public interfaces.
We distinguish between interfaces that return location information of the returned tuples (e.g., Google Maps), and interfaces that do not return location information (e.g.,  Sina Weibo).
For both types of interfaces, we develop  aggregate estimation algorithms that are based on novel techniques for precisely computing or approximately estimating the Voronoi cell of tuples. We discuss a comprehensive set of real-world experiments for testing our algorithms, including experiments on  Google Maps, WeChat, and Sina Weibo. 
\end{abstract}

\section{Introduction} \label{sec:intro}

\subsection{LBS with a $k$NN Interface}
{\em Location based services} (LBS) have become very popular in recent years. They range from map services (e.g., Google Maps) that store geographic locations of points of interests (POIs), to online social networks (e.g., WeChat, Sina Weibo, FourSquare) that leverage user geographic locations to enable various recommendation functions. The underlying data model of these services
may be viewed as a database of tuples that are either POIs (in case of map services) or users (in case of social networks), along
with their geographical coordinates (e.g., latitude and longitude) on a plane.

However, third-party applications and/or end users do not have complete and direct access to this entire database. 
The database is essentially ``hidden'', and access
is typically limited to a restricted public web query interface or API by which one can specify an arbitrary location as a query,
which returns at most $k$ nearest tuples to the query point (where $k$ is typically a small
number such as 10 or 50). For example, in Google maps it is possible to  specify an arbitrary location and
get the ten nearest Starbucks.
Thus, the query interfaces of these services may be abstractly modeled as a ``nearest neighbor'' $k$NN interface
over a database of two dimensional points on a plane:  given an arbitrary query point, the system returns the $k$  points in the database that are nearest to
the query point.

In addition, there are important differences among the services based on the type of information that is returned along with the $k$ tuples.
 Some services (e.g., Google maps) return the locations (i.e., the $x$ and $y$ coordinates) of the $k$ returned tuples.
 We refer to such services as  {\em Location-Returned LBS} (LR-LBS).
Other services (e.g., WeChat, Sina Weibo) return a ranked list of $k$ nearest tuples, but suppress the location of each tuple, returning only the tuple ID and perhaps some
other attributes (such as tuple name).
We refer to such services as {\em Location-Not-Returned LBS} (LNR-LBS).

Both types of services impose additional querying limitations, the most important being a per user/IP limit on the number of
queries one can issue over a given time frame (e.g., by default, Google map API imposes a query rate limit of 10,000 per user per day).

\subsection{Aggregate Estimations}
For many interesting third-party applications, it is important to collect {\em aggregate statistics} over the tuples contained in such hidden databases $-$ such as {\em sum}, {\em count}, or {\em distributions} of the
tuples satisfying certain selection conditions. For example, a hotel recommendation application would like to know the average review scores for Marriott vs Hilton hotels in Google Maps;
a cafe chain startup would like to know the number of Starbucks restaurants in a certain geographical region; a demographics researcher may wish to know the gender ratio of users of social networks in China etc.

Of course, such aggregate information can be obtained by entering into data sharing agreements with the location-based service providers, but this approach can often be extremely expensive,
and sometimes impossible if the data owners are unwilling to share their data. Therefore, in this paper we consider the problem of obtaining {\em approximate estimates} of such aggregates by only
querying the database via its restrictive public interface. Our goal is to minimize {\em query cost} (i.e., ask as few queries as possible) in an effort to adhere to the rate limits imposed by the interface, and yet make the aggregate estimations as accurate as possible.

The closest prior work is \cite{DKA+11}. This approach is based on generating random point queries,  estimating the area of the {\em Voronoi cell} \cite{de2000computational} of the returned top-1 tuple for each query, and estimating the aggregate from these top-1 tuples by making corrections for sampling bias using the area of the Voronoi cell.  However, there are several limitations of this work.
First, this approach works only for LR-LBS, but does not work  for LNR-LBS, and is thus inapplicable over a large variety of location based services such as  WeChat and Sina Weibo that do not return
precise location or distance information. Second, the approximate nature of the technique used for estimating the area of a Voronoi cell  makes the
overall aggregate estimation {\em inherently biased}. Third, the
method only uses the top-1 returned tuple for each query in its calculations  (the remaining $k-1$ tuples are ignored) thus leading to inefficiency in the estimation procedure. We discuss this
and other related work in \S\ref{sec:relWork}.

\subsection{Outline of Technical Results}

\smallskip\noindent {\bf Results over LR-LBS Interfaces:}
We first describe our results over LR-LBS interfaces. Like  \cite{DKA+11}, our approach is also based on generating random point queries and computing the area of Voronoi cells of the returned tuples, but a key differentiator is that our approach provides an efficient yet {\em precise} computation of the area of  Voronoi cells. This procedure is fundamentally
different from the approximate procedure used  in \cite{DKA+11} for estimating the area of Voronoi cells, and is one of the significant contributions of our paper.
This leads to {\em unbiased estimations} of SUM and COUNT aggregates
over LR-LBS interfaces; in contrast, the estimations in \cite{DKA+11} have inherent (and unknown) sampling bias.

Moreover, we also leverage the top-$k$ returned tuples of a query (and not just the top-1) by generalizing to the concept of a {\em top-$k$ Voronoi cell}, leading to significantly more efficient aggregate estimation algorithms.
We also developed four different techniques for reducing the estimation error (and thereby estimation error) over LR-LBS interfaces: faster initialization, leveraging history, variance reduction through dynamic selection of query results, and a Monte Carlo method which leverages current knowledge of upper/lower bounds on the Voronoi cell without sacrificing the unbiasedness of estimations.

We combine the above ideas to produce Algorithm LR-LBS-AGG, a completely unbiased estimator for COUNT and SUM queries with or without selection conditions. We note that AVG queries can be computed as SUM/COUNT.

\smallskip\noindent {\bf Results over LNR-LBS Interfaces:}
We also consider the  problem of aggregate estimations over LNR-LBS interfaces. To the best of our knowledge, this is a novel problem with no prior work.
Recall that such type of $k$NN interfaces only return a ranked list of the top-$k$ tuples in response
to a point query, and location information for these tuples is suppressed. None of the prior work for  LR-LBS interfaces can be extended to LNR-LBS
interfaces. For such interfaces, we develop aggregate estimation algorithms that are not completely bias-free, but can guarantee an arbitrarily small sampling bias.
The key idea here is the inference of a tuple's Voronoi cell to an arbitrary precision level with a small number of queries from {\em merely the ranks of the returned tuples}.

On a related note, we also show how one can infer the position of a tuple in LNR-LBS, again at a level of arbitrary precision - a problem, while of independent interest, is also critical for enabling the estimations of aggregates that feature selection conditions on tuples' locations (e.g., the COUNT of social network users within 10 meters of major highways).
We also study a subtle extension to cases where $k > 1$; in particular we study the challenge brought by this case by the (possibly) concave nature of top-$k$ Voronoi cells, and develop an efficient algorithm to detect potential concaveness and guarantee the accuracy of the inferred Voronoi cell.

We combine the above ideas to produce Algorithm LNR-LBS-AGG, an estimator for COUNT and SUM queries with or without selection conditions. Unlike Algorithm LR-LBS-AGG, this estimator
may be biased, but the bias can be controlled to any arbitrary desired precision. As before, we note that AVG queries can be computed as SUM/COUNT.

\subsection{Summary of Contributions} 

\begin{itemize}[noitemsep,topsep=1pt,parsep=1pt,partopsep=0pt]

\item Location based services have become very popular in recent years, and aggregate estimation over such ``hidden'' databases with their restricted  $k$NN query interfaces is an important problem
with numerous applications. In our work, we consider both LR-LBS (locations returned) as well as the more novel LNR-LBS (locations not returned) interfaces.
\item For  LR-LBS interfaces, we develop Algorithm LR-LBS-AGG for estimating COUNT and SUM aggregates with or without selection conditions. It represents a significant improvement over prior work along multiple dimensions: a novel way of precisely calculating Voronoi cells lead to completely unbiased estimations; top-$k$ returned tuples are leveraged rather than only top-1; several innovative techniques developed for reducing error and increasing efficiency.
\item For  LNR-LBS interfaces, we develop Algorithm LNR-LBS-AGG for estimating COUNT and SUM aggregates with or without selection conditions.This is a novel problem with no prior work. The estimated
aggregates are not bias-free, but the sampling bias can be controlled to any desired precision. Among several key ideas, we show how a Voronoi cell can be inferred to an arbitrary degree of precision from merely the ranks of returned tuples to point queries.
\item Our contributions also include a comprehensive set of real-world experiments. Specifically, we conducted online tests over a number of real-world LBS, e.g., Google Maps (LR-LBS) for estimating the number of Starbucks in US (and compared the results with the ground truth published by Starbucks); WeChat and Sina Weibo for estimating the percentage of male/female users in China.
\end{itemize}

\section{Background} \label{sec:pre}

\subsection{Model of LBS}\label{subsec:lbsModels}

A location based service (LBS) supports $k$NN queries over a database $D$ of tuples with location information. These tuples can be points of interest (e.g. Google Maps) or users (e.g. WeChat, Sina Weibo). A $k$NN query $q$ takes as input a location (e.g., longitude/latitude combination), and returns the top-$k$ nearest tuples selected and ranked according to a pre-determined ranking function. Since the only input to a query is a location, we use $q$ to also denote the query's location without introducing ambiguity. Most of the popular LBS follow $k$NN query model. For most parts of the paper, we consider the ranking function to be Euclidean distance between the query location and each tuple's location. Extensions to other ranking functions are discussed in \S\ref{subsec:lbsConstraints}.

Note that tuples in an LBS system contain not only location information but other many other attributes - e.g.,  a POI in Google Maps includes attributes such as POI name, average review ratings etc. Depending on which attributes of a tuple are returned by the $k$NN interface - more specifically, whether the location of a tuple is returned - we can classify LBS into two main categories:

\vspace{1mm}
\noindent{\bf LR-LBS:} A Location-Returned-LBS (LR-LBS) returns the precise location for each of the top-$k$ returned tuples, along with possibly other attributes. Google Maps, Bing Maps, Yahoo!~Maps, etc., are prominent examples of LR-LBS, as all of them display the precise location of each returned POI. Note that some LBS may return a variant of the precise locations - e.g., Skout and Momo returns not the precise location of a tuple, but the precise distance between the query location and the tuple location. We consider such LBS to be in the LR-LBS category because, through previously studied techniques such as trilateration (e.g., \cite{LZG+13}), one can infer the precise location of a tuple with just 3 queries.

\vspace{1mm}
\noindent{\bf LNR-LBS:} A Location-Not-Returned-LBS (LNR-LBS), on the other hand, does {\em not} return tuple locations - i.e., only other attributes such as name, review rating, etc., are returned. This category is more prevalent among location based social networks (presumably because of potential privacy concerns on precise user locations). Examples here include WeChat, which returns attributes such as name, gender, etc., for each of the top-$k$ users, but not the precise location/distance. Other examples include Sina Weibo, WeChat, etc., which feature a similar query return semantics.

\vspace{1mm}
\noindent{\bf Common Interface Features and Limitations:} Generally speaking, there are two ways through which an LBS (either LR- or LNR-LBS) supports a $k$NN query. One is an interactive web or API interface which allows a location to be explicitly specified as a latitude/longitude pair. Google Maps is an example to this end. Another common way is for the LBS (e.g., as a mobile app) to directly retrieve the query location from a positioning service (such as GPS, WiFi or Cell towers) and automatically issue a $k$NN query accordingly. In the second case, there is no explicit mechanism to enter the location information. Nonetheless, it is important to note that, even in this case, we have the ability to issue a query from any arbitrary location {\em without} having to physically travel to that location. All mobile OS have debugging features that allow arbitrary location to be used as the output of the positioning (e.g., GPS) service. 

Many LBS also impose certain interface restrictions: One is the aforementioned top-$k$ restriction (i.e., only the $k$ nearest tuples are returned). Another common one is a {\em query rate limit} - i.e., many LBS limit the maximum number of $k$NN queries one can issue per unit of time. For example, by default Google Maps allows 10,000 location queries per day while Sina Weibo allows only 150 queries per hour. This is an important constraint for our purpose because it makes the {\em query-issuing} process the bottleneck for aggregate estimation. To understand why, note that even with the generous limit provided by Google Maps, one can issue only 7 queries per minute - this 8.6 second per query overhead\footnote{Of course, one can shorten it with multiple IP addresses and API accounts - but similarly, one can use parallel processing to speed up offline processing as well.} is orders of magnitude higher than any offline processing overhead that may be required by the aggregate estimation algorithm. Thus, this interface limitation essentially makes {\em query cost} the No.~1 performance metric to optimize for aggregate estimation.
An LBS might impose other, more subtle constraints - e.g., a maximum coverage limit which forbids tuples far away (say more than 5 miles away) from a query location to be returned. We shall discuss about these constraints in \S\ref{subsec:lbsConstraints}.

\subsection{Voronoi Cells}
{\em Voronoi cell} \cite{de2000computational} is a key geometry concept used extensively by our algorithms developed in the paper. Thus, we introduce this concept here as part of the preliminaries. Consider each tuple $t \in D$ as a point on a Euclidean plane bounded by a box $B$ (which covers all tuples in $D$). We have the following definition.

\newtheorem{definition}{Definition}
\begin{definition} [Voronoi Cell] Given a tuple $t \in D$, the Voronoi cell of $t$, denoted by $V(t)$, is the set of points on the $B$-bounded plane that are closer to $t$ than any other tuple in $D$.
\end{definition}

Note that the $B$-bound ensures that each Voronoi cell is a finite region. The Voronoi cells of different tuples are mutually exclusive - i.e., the {\em Voronoi diagram} is the subdivision of the plane into regions, each corresponding to all query locations that would return a certain tuple as the nearest neighbor\footnote{We assume general positioning\cite{de2000computational} - i.e., no two tuples have the exact same location and no four points on the same circle.}.

For the purposes of our paper, we define an extension of the Voronoi cell concept to accommodate the top-$k$ (when $k > 1$) query return semantics. Specifically, given a tuple $t \in D$, we define the {\em top-$k$ Voronoi cell} of $t$, denoted by $V_k(t)$,
as the set of query locations on the plane that return $t$ as one of the top-$k$ results. 
There are four important observations about this concept:

First, the top-$k$ Voronoi cells for different tuples are no longer mutually exclusive. Each location $l$ belongs to exactly $k$ top-$k$ Voronoi cells corresponding to the top-$k$ tuples returned by query over $l$.
Second, our concept of top-$k$ Voronoi cells is {\em fundamentally different} from the $k$-th ordered Voronoi cells in geometry \cite{de2000computational} - each of which is formed by points with the exact same $k$ closest tuples. The difference is illustrated in Figure~\ref{fig:topKVoronoiDiagrams} - while each colored region is a $k$-th ordered Voronoi cell, a top-$k$ Voronoi cell may cover multiple regions with different colors. For example, the top-2 Voronoi cell for tuple A is marked by a red border and consists of two different $k$-th ordered Voronoi cells (AB and AE).

\begin{figure}[htp]
    \centering
    \subfigure{\includegraphics[height=25mm, width=30mm]{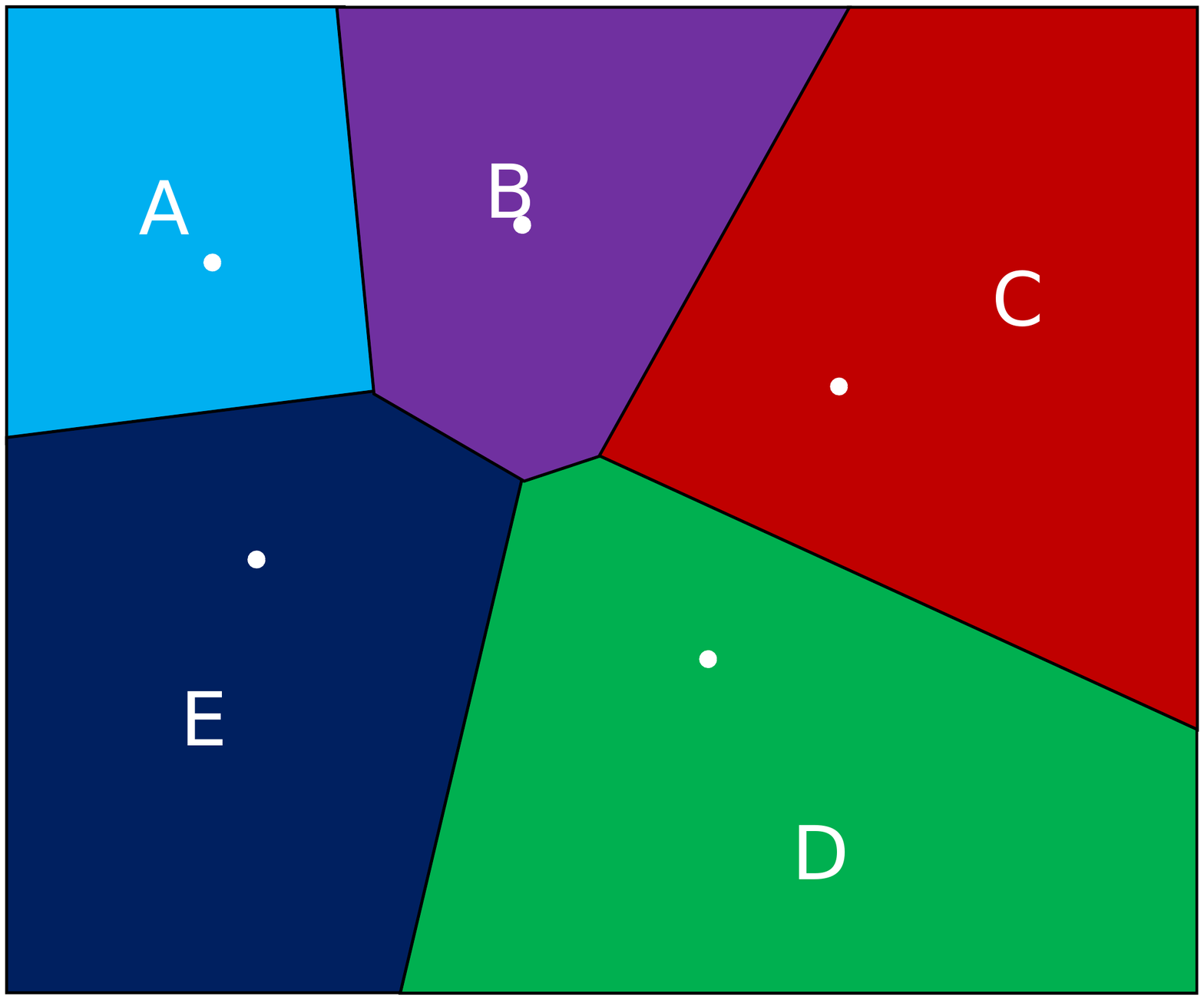}}\quad
    \subfigure{\includegraphics[height=25mm, width=30mm]{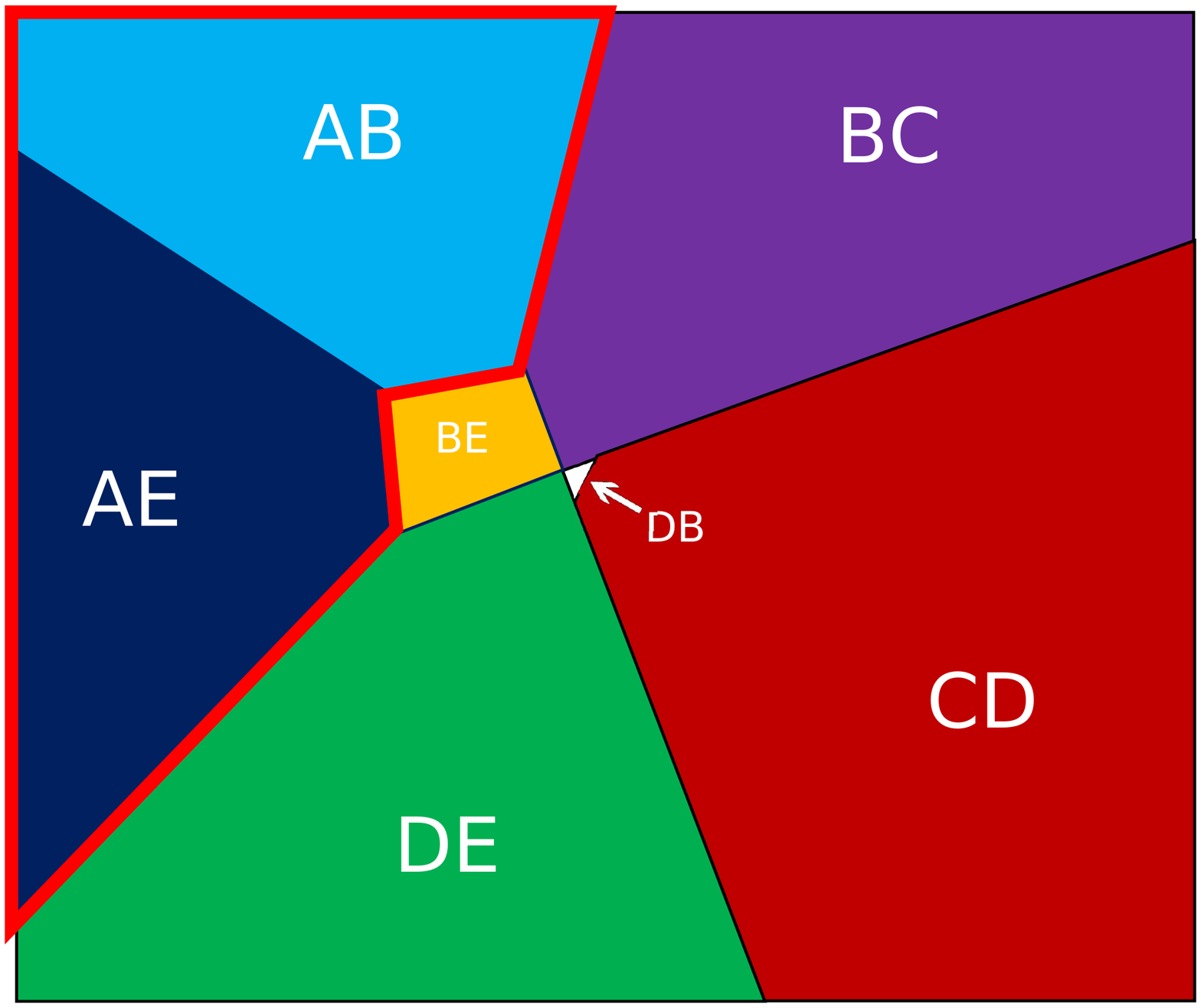}}
    \caption{Concavity of top-$k$ Voronoi Diagrams}
    \label{fig:topKVoronoiDiagrams}
\end{figure}

Third, while both top-1 Voronoi cells and $k$-th order Voronoi cells are guaranteed to be convex \cite{de2000computational}, the same does not hold for top-$k$ Voronoi cells when $k > 1$. For example, from Figure~\ref{fig:topKVoronoiDiagrams} we can see that the aforementioned top-$2$ Voronoi cell for tuple A is concave.
Fourth, a top-$k$ Voronoi cell tend to contain many more edges than a top-1 Voronoi cell. As we shall discuss later in the paper, the larger number of edges and the potential concaveness makes computing the top-$k$ Voronoi cell of a tuple $t$ more difficult.

\begin{figure*}[ht]                                                                                 
\begin{minipage}[t]{0.4\linewidth}                                                                  
\centering                                                                                          
    \includegraphics[width = 65mm, height = 35mm]{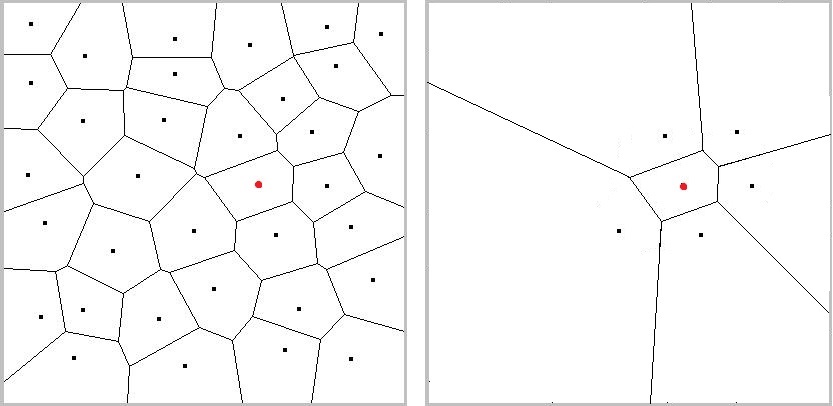}                                 
    \vspace{-3mm}\caption{Illustration of Theorem 1}                                                
    \label{fig:thm1_illustration}                                                                   
\end{minipage}                                                                                      
\hspace{1mm}                                                                                        
\begin{minipage}[t]{0.3\linewidth}                                                                  
\centering                                                                                          
    \includegraphics[width = 50mm, height = 36mm]{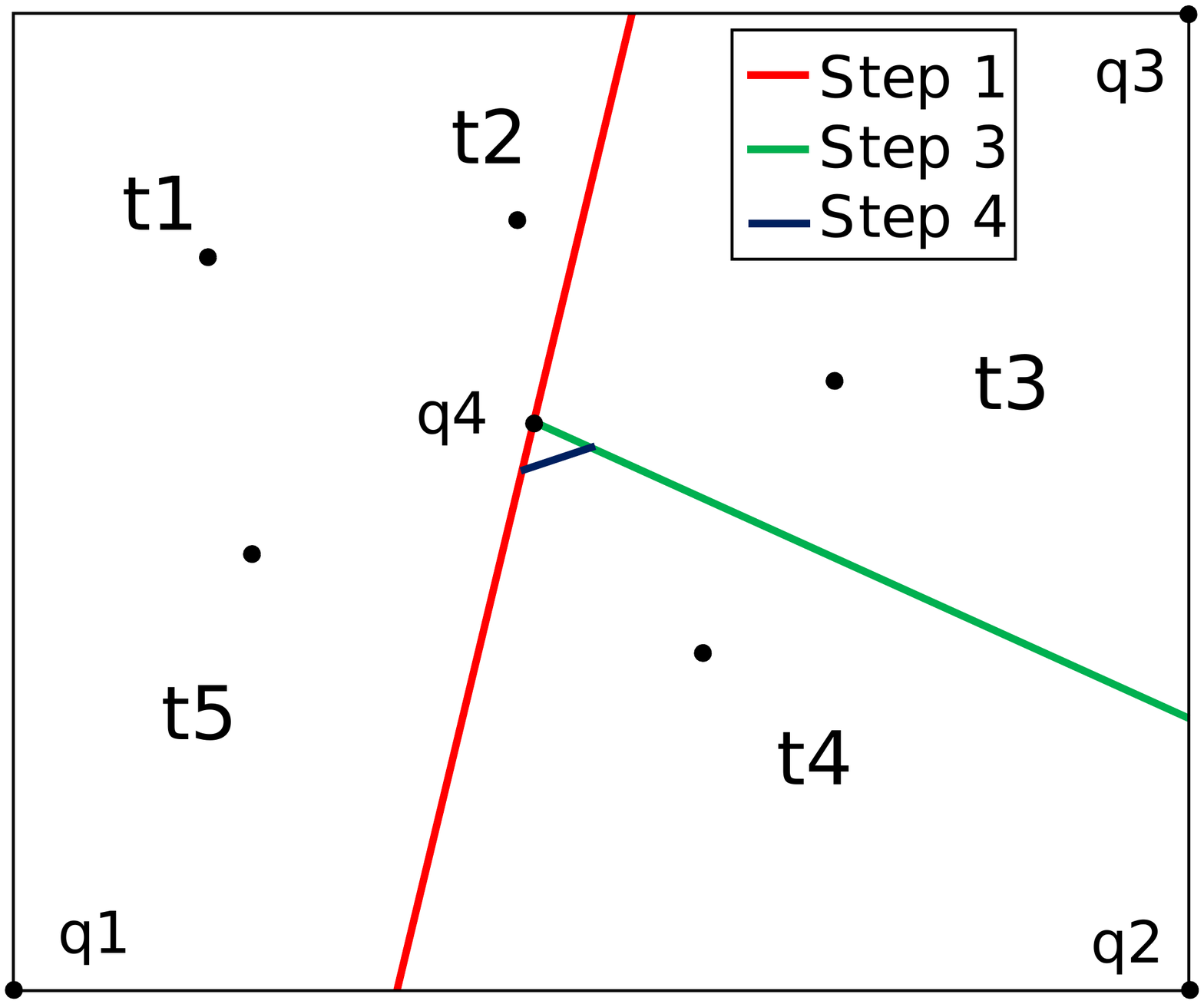}                               
    \vspace{-3mm}\caption{Illustration of LR-LBS-AGG}                                               
    \label{fig:lr_lbs_agg_illstration}                                                              
\end{minipage}                                                                                      
\hspace{1mm}                                                                                        
\begin{minipage}[t]{0.3\linewidth}                                                                  
\centering                                                                                          
    \includegraphics[width = 50mm, height = 36mm]{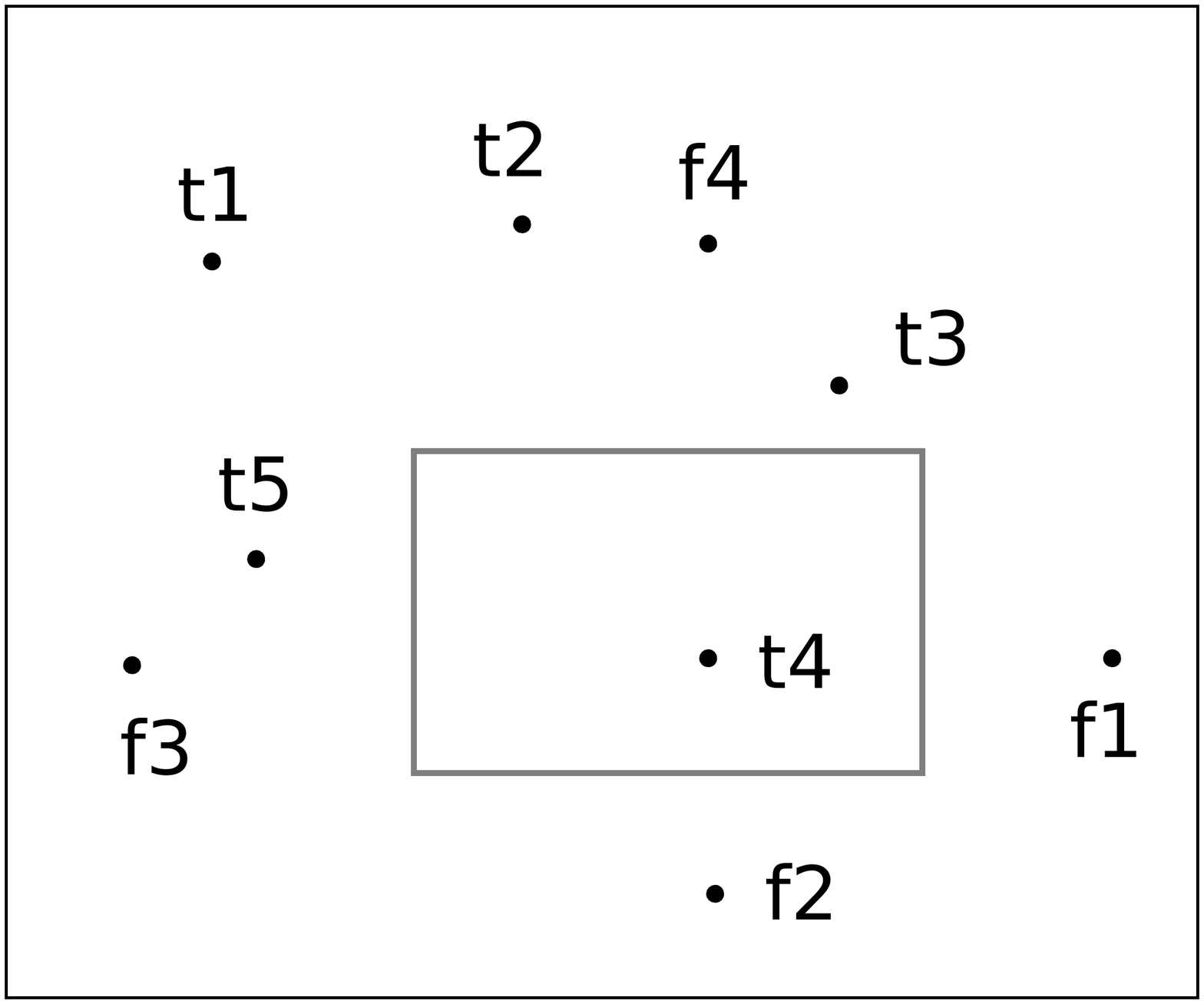}                       
    \vspace{-3mm}\caption{Faster Initialization - Success}                                          
    \label{fig:fast_init_succ}                                                                      
\end{minipage}                                                                                      
\hspace{1mm}                                                                                        
\end{figure*}           

\subsection{Problem Definition}
In this paper, we address the problem of aggregate estimations over LBS. Specifically, we consider aggregate queries of the form
\textsc{Select Aggr($t$) From $D$ Where} {\em Cond}
\noindent where \textsc{Aggr} is an aggregate function such as SUM, COUNT and AVG that can be evaluated over one or more attributes of $t$, and {\em Cond} is the selection condition. Examples include the COUNT of users in WeChat or AVG rating of restaurants in Texas at Google Maps.

There are two important notes regarding the selection condition {\em Cond}. First, we support any selection condition that can be independently evaluated over a single tuple - i.e., it is possible to determine whether a tuple $t$ satisfies {\em Cond} based on nothing but $t$. Second, for both LR- and LNR-LBS, we support the specification of a tuple's location as part of {\em Cond} - even when such a location is not returned, like in LNR-LBS. This is possible thanks to what we shall discuss in \S\ref{subsec:tuplePositionComputation} - i.e., even with LNR-LBS, one can derive the location of a tuple to arbitrary precision after issuing a small number of queries. As such, we support aggregates such as the percentage of female WeChat users in Washington, DC).

In most part of the technical sections, we focus on aggregates without selection conditions - the straightforward extensions to various types of selection conditions will be discussed in \S\ref{sec:ext}.

\vspace{1mm}
\noindent {\bf Performance Measures:} The performance of an aggregate estimation algorithm is measured in terms of efficiency and accuracy.
Given the query-rate limit enforced by all LBS, the efficiency is measured by {\em query cost} -
i.e. the number of queries and/or API calls that the algorithm issues to LBS.
Often, we are given a fixed budget (based on the rate limits) and hence designing an efficient algorithm 
that generates accurate estimates within the budgetary constraints is crucial.
The accuracy of an estimation $\tilde{\theta}$ of an aggregate $\theta$ could be measured by the standard measure of
{\em relative error } $|\tilde{\theta} - \theta|/\theta$.
Note that, for any sampling-based approach (like ours), the relative error is determined by two factors: {\em bias}, i.e. 
$|E(\tilde{\theta} - \theta)|$
, and {\em variance} of $\tilde{\theta}$.
The mean squared error, MSE of the estimation is computed as MSE = bias$^2$ + variance.

An interesting question often arises in practice is how we can determine the relative error achieved by our estimation. If the population variance is known, then one can apply standard statistics techniques to compute the confidence interval of aggregate estimations\cite{freedman2009statistical}. Absence of such knowledge, a common practice is to approximate the population variance with {\em sample} variance, which can be computed from the samples we use to generate the final estimation and use Bessel's correction \cite{freedman2009statistical} to correct the result.

\section{LR-LBS-AGG}\label{sec:lrlbsagg}

In this section, we develop LR-LBS-AGG, our algorithm for generating unbiased SUM and COUNT estimations over an LR-LBS query interface. Specifically, we start with introducing our key idea of precisely computing the (top-$k$) Voronoi cell of a given tuple, which enables the unbiased aggregate estimations. While this idea guarantees unbiasedness, it may require a large number of queries per (randomized) estimation, leading to a large estimation variance (and therefore, error) when the query budget is limited. Hence we develop four techniques for reducing the estimation error while {\em maintaining} the complete unbiasedness of aggregate estimations. Finally, we combine all ideas to produce Algorithm LR-LBS-AGG at the end of this section.

\subsection{Key Idea: Precisely Compute Voronoi Cells} \label{sec:lrk}

\noindent{\bf Reduction to Computing Voronoi Cells:} We start by describing a baseline design which illustrates why the problem of aggregate estimations over an LBS's $k$NN interface ultimately boils down to computing the volume of the Voronoi cell corresponding to a tuple $t$. As an example, consider the estimation of COUNT(*) (over a given region) through an LR-LBS with a top-1 interface.

We start by choosing a location $q$ uniformly at random from the region, and then issue a query at $q$. Let $t$ be the tuple returned by $q$. Suppose that we can compute the Voronoi cell of $t$ (as defined in \S\ref{sec:pre}), say $V(t)$. A key observation here is that the sampling probability of $t$, i.e., the probability for the above-described randomized process to return $t$, is exactly
$p(t) = \frac{|V(t)|}{|V_0|}$
where $|V(t)|$ and $|V_0|$ are the volume of $V(t)$ and the entire region, respectively. Note that knowledge of $p(t)$ directly leads to a completely unbiased estimation of COUNT(*): $r = 1/p(t)$, because
\begin{align}
\mathrm{Exp}(r) = \sum_{t \in D} p(t) \cdot \frac{1}{p(t)} = |D|, \label{equ:uce}
\end{align}
where $\mathrm{Exp}(\cdot)$ is the expected value of the estimation (taken over the randomness of the estimation process), and $|D|$ is the total number of tuples in the database. From (\ref{equ:uce}), one can see that every SUM and COUNT aggregate we support can be estimated without bias - the only change required is on the numerator of estimation. Instead of having 1 as in the COUNT(*) case, it should be the evaluation of the aggregate over $t$ - e.g., if we need to estimation SUM($A_1$) where $A_1$ is an attribute, then the numerator should be $t[A_1]$, i.e., the value of $A_1$ for $t$. If the aggregate is COUNT with a selection condition, then the numerator should be either 1 if $t$ satisfies the condition, or 0 if it does not.
One can see from the above discussions that, essentially, the problem of enabling unbiased SUM and COUNT estimations is reduced to that of {\em precisely} computing the volume of $V(t)$, i.e., the Voronoi cell of a given tuple $t$.

\vspace{2mm}
\noindent{\bf Computing Voronoi Cells:} For computing the Voronoi cell of a given tuple, a nice feature of the LR-LBS interface is that it returns the precise location of every returned tuple. Clearly, if we can somehow ``collect'' all tuples with Voronoi cells adjacent to that of $t$, then we can precisely compute the Voronoi cell of $t$ based on the locations of these tuples (and $t$). As such, the key challenges here become: (1) how do we collect these tuples and (2) how do we know if/when we have collected all tuples with adjacent Voronoi cells to $t$? Both challenges are addressed by the following theorem which forms the foundation of design of Algorithm LR-LBS-AGG.

\newtheorem{theorem}{Theorem}
\begin{theorem} \label{thm:lr}
Given a tuple $t \in D$ and a subset of tuples $D^\prime \subseteq D$ such that $t \in D^\prime$, the Voronoi cell of $t$ defined according to $D^\prime$, represented by $P^\prime$, is the same as that according to the entire dataset $D$, denoted by $P$, if and only if for all vertices $v$ of $P^\prime$, all tuples returned by the nearest neighbor query issued at $v$ over $D$ belong to $D^\prime$.
\end{theorem}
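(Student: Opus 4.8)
The plan is to reduce the set equality $P = P'$ to a finite check at the vertices of $P'$, exploiting the convexity of (first-order) Voronoi cells. First I would record the monotonicity that underlies everything. Writing $H_s = \{x : |x-t| \le |x-s|\}$ for the half-plane of points at least as close to $t$ as to a competing site $s$, the cell of $t$ computed from any site set is the intersection of these half-planes (within $B$): $P' = \bigcap_{t_i \in D' \setminus \{t\}} H_{t_i} \cap B$, whereas $P = P' \cap \bigcap_{s \in D \setminus D'} H_s$. Thus $P \subseteq P'$ always (more competing sites only shrink the cell), and $P = P'$ if and only if $P' \subseteq H_s$ for every deleted site $s \in D \setminus D'$ — equivalently, no $s \in D \setminus D'$ is ever strictly closer to a point of $P'$ than $t$ is. I take the nearest-neighbor query at a point to return the tuple(s) of $D$ at minimum distance to that point.

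The crux, and the step I expect to be the main obstacle, is to replace the containment $P' \subseteq H_s$ — a statement about the entire continuous cell — by a condition at the finitely many vertices of $P'$. Here I would use that each $H_s$ is a genuine half-plane: squaring the distances cancels the quadratic terms, so $H_s = \{x : \ell_s(x) \le 0\}$ for an affine function $\ell_s$. Since $P'$ is a bounded convex polygon, it is the convex hull of its vertices, and an affine function attains its maximum over $P'$ at a vertex. Hence $P' \subseteq H_s$ if and only if every vertex $v$ of $P'$ satisfies $|v-t| \le |v-s|$. Combining this with the first paragraph, $P = P'$ holds if and only if for every vertex $v$ of $P'$ no tuple of $D \setminus D'$ is strictly closer to $v$ than $t$.

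It remains to match this geometric condition with the statement's nearest-neighbor condition, which I would establish in both directions; the useful fact is that a vertex $v$ of $P'$ is equidistant to $t$ and two further sites of $D'$, so $|v-t|$ is exactly the minimum distance from $v$ to $D'$ and every tuple of $D'$ lies at distance $\ge |v-t|$ from $v$. For the \emph{if} direction I argue by contrapositive: if $P \ne P'$ then $P \subsetneq P'$, so some $s \in D \setminus D'$ gives $P' \not\subseteq H_s$, and by the vertex reduction some vertex $v$ has $|v-s| < |v-t|$ strictly; then the minimum distance from $v$ over $D$ is below $|v-t|$, so the tuple the query returns at $v$ lies outside $D'$, violating the condition. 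For the \emph{only if} direction, suppose $P = P'$; then each vertex $v$ of $P'$ is a genuine Voronoi vertex of $D$, and invoking the general-position assumption (no four tuples cocircular) exactly three sites of $D$ realize the minimum distance to $v$ — namely $t$ and the two $D'$-sites defining the edges of $P'$ at $v$, all in $D'$ — so every tuple returned by the query at $v$ belongs to $D'$. I note that general position is essential only for this second direction, to rule out a deleted site tying with $t$ at a vertex; the \emph{if} direction needs only convexity and monotonicity.
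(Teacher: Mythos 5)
Your proposal is correct and follows essentially the same route as the paper's proof: first the monotonicity $P \subseteq P^\prime$, then the reduction of the containment test to the finitely many vertices of $P^\prime$ (which the paper asserts directly from convexity, and which you justify per half-plane via affine functions maximized at vertices), and finally the contrapositive argument that a vertex of $P^\prime$ escaping $P$ yields a returned tuple outside $D^\prime$. Your write-up is in fact more complete than the paper's, which proves only this ``if'' direction and leaves the ``only if'' direction---where your appeal to the general-position assumption to rule out ties at a vertex is exactly the missing ingredient---entirely implicit.
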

\begin{proof}
First, note that there must be $P \subseteq P^\prime$, because for a given location $q$, if there is already a tuple $t^\prime$ in $D^\prime$ that is closer to $q$ than $t$, then there must at least one tuple in $D$ that is closer to $q$ than $t$. Second, if $P \neq P^\prime$ (i.e., $P \subset P^\prime$), then there must at least one vertex of $P^\prime$, say $v$, that falls outside $P$. i.e. there must exist a tuple $t_0 \in (D \backslash D^\prime)$ that is closer to $v$ than all tuples in $D^\prime$. 
\end{proof}

{\bf Example 1:}
Figure~\ref{fig:thm1_illustration} provides an illustration for Theorem~\ref{thm:lr}.
In order to compute the Voronoi cell of the tuple corresponding to the red dot, it suffices to know the location of the adjacent tuples.
Since each Voronoi edge is a perpendicular bisector between the adjacent tuples, 
the entire Voronoi cell can be computed as the convex shape induced by the intersections of the edges.

Theorem~\ref{thm:lr} answers both challenges outlined above: it tells us when we have collected all ``adjacent'' tuples - when all vertices of $t$'s Voronoi cell computed from the collected tuples return only collected tuples. It also tells us how to collect more ``adjacent'' tuples when not all of them have been collected - any vertex which fails the test naturally returns some tuples that have not been collected yet, adding to our collection and starting the next round of tests.

According to the theorem, a simple algorithm for constructing the exact Voronoi cell for $t$ is as follows: We start with $D^\prime = \{t\}$. Now the Voronoi cell is the entire region (say an extremely large rectangle). We issue queries corresponding to its four vertices. If any query returns a point we have not seen yet - i.e., not in $D^\prime$ - we append it to $D^\prime$, recompute the Voronoi cell, and repeat the process. Otherwise, if all queries corresponding to vertices of the Voronoi cell return points in $D^\prime$, we have obtained the real Voronoi cell for $t \in D$.
One can see that the query complexity of this algorithm is $O(n)$, where $n$ is the number of points in the database $D$, because each query issued either confirms a vertex of the final Voronoi cell (which has at most $n - 1$ vertices), or returns us a new point we have never seen before (there are at most $n - 1$ of these too). It is easy to see that the bound is tight - as one can always construct a Voronoi cell that has $n - 1$ edges and therefore requires $\Omega(n)$ top-1 queries to discover (after all, each such query returns only 1 tuple). An example here is when $t$ is in the center of a circle, on which the other $n - 1$ points are located.
Algorithm~\ref{alg:lrlbsagg_baseline} shows the pseudocode of the baseline approach which we improve in Section~\ref{sec:rer}.

\begin{algorithm}[!htb]
\caption{{\bf LNR-LBS-AGG-Baseline}}
\begin{algorithmic}[1]
\label{alg:lrlbsagg_baseline}
\STATE {\bf while} query budget is not exhausted
    \STATE \hindent $q$ = location chosen uniformly at random; \quad $t$ = query($q$)
    \STATE \hindent $V(t) = V_0$; \quad $D'=\{t\}$
    \STATE \hindent {\bf repeat} till $D'$ does not change between iterations
        \STATE \hindent[2] {\bf for} each vertex $v$ of $V(t)$: $D'=D'\cup$ query$(v)$
        \STATE \hindent[2] Update $V(t)$ from $D'$
\STATE Produce aggregate estimation using samples
\end{algorithmic}
\end{algorithm}

{\bf Example 2:}
Figure~\ref{fig:lr_lbs_agg_illstration} provides a simple run-through of the algorithm for a dataset with 5 tuples $\{t_1, \ldots, t_5\}$.
Suppose we wish to compute $V(t_4)$. Initially, we set $D'=\{t_4\}$ and $V(t_4)=V_0$, the entire bounding box.
We issue query $q_1$ that returns tuple $t_5$ and hence $D'=\{t_4, t_5\}$. 
We now obtain a new Voronoi edge that is the perpendicular bisector between $t_4$ and $t_5$.
The Voronoi cell after step 1 is highlighted in light grey.
In step 2, we issue query $q_2$ that returns $t_4$ resulting in no update.
In step 3, we issue query $q_3$ that returns $t_3$. 
$D'=\{t_3, t_4, t_5\}$ and we obtain a new Voronoi edge as the perpendicular bisector between $t_3$ and $t_4$ depicted in dark medium gray.
In step 4, we issue query $q_4$ that returns $t_2$ resulting in the final Voronoi edge depicted in dark grey.
Further queries over the vertices for $V(t_4)$ does not result in new tuples concluding the invocation of the algorithm.

\noindent{\bf Extension to $k > 1$:} Interestingly, no change is required to the above algorithm when we consider the top-$k$ Voronoi cell rather than the traditional, i.e., top-1 Voronoi cell. To understand why, note that Theorem~\ref{thm:lr} directly extends to top-$k$ Voronoi cells - as a top-$k$ Voronoi computed from $D^\prime$ still must completely cover that for $D$; and any vertex of the top-$k$ Voronoi from $D^\prime$ which is outside that from $D$ must return at least one tuple outside $D^\prime$. We further describe how to leverage $k>1$ in Sections~\ref{subsec:lrkgt1} and \ref{subsec:lnrkgt1}.

\subsection{Error Reduction} \label{sec:rer}

Before describing the various error reduction techniques we develop for aggregate estimations over LR-LBS, we would like to first note that, while we use the term ``error reduction'' as the title of this subsection, some of the techniques described below indeed focus on making the computation of a Voronoi cell more efficient. The reason why we call all of them ``error reduction'' is because of the inherent relationship between efficiency and estimation error - if the Voronoi-cell computation becomes more efficient, then we can do so for more samples, leading to a larger sample size and ultimately, a lower estimation error (which is inversely proportional to the square root of sample size\cite{freedman2009statistical}).

\begin{figure*}[t]                                                                                  
\begin{minipage}[t]{0.3\linewidth}                                                                  
\centering                                                                                          
    \includegraphics[width = 55mm, height = 32mm]{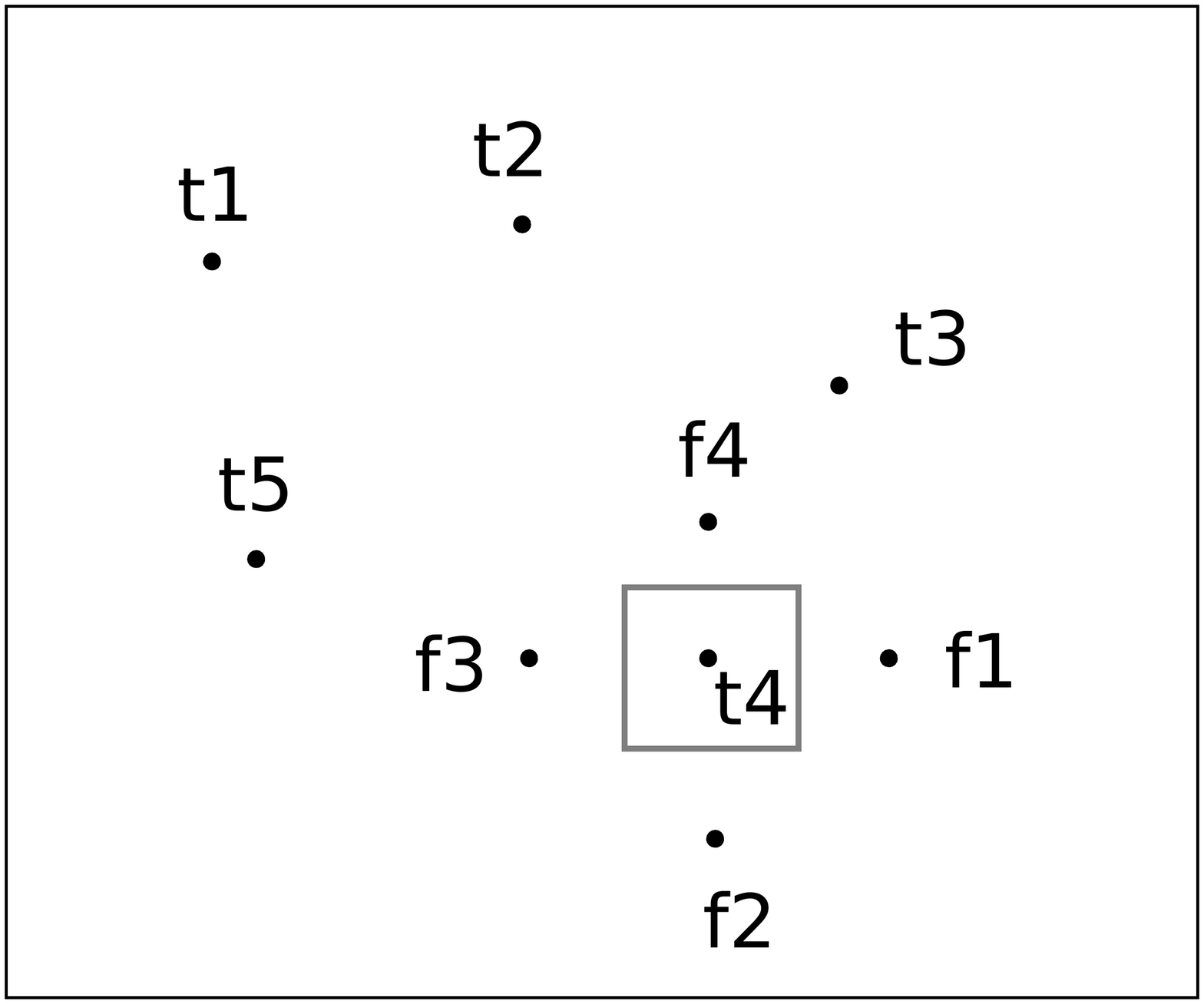}                       
    \vspace{-6mm}\caption{Faster Initialization - Failure}                                          
    \label{fig:fast_init_fail}                                                                      
\end{minipage}                                                                                      
\hspace{1mm}                                                                                        
\begin{minipage}[t]{0.3\linewidth}                                                                  
\centering                                                                                          
    \includegraphics[width = 55mm, height = 32mm]{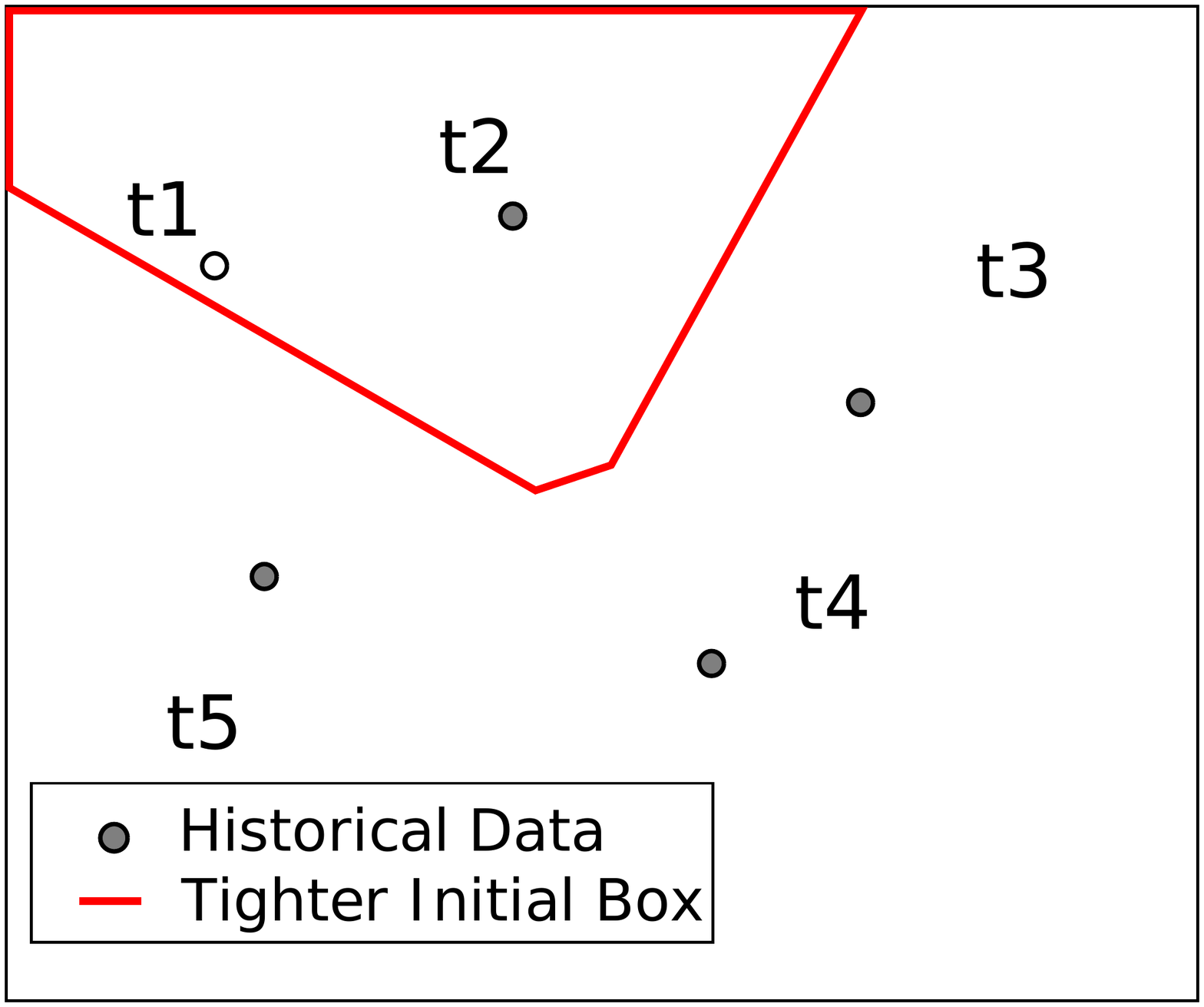}                     
    \vspace{-6mm}\caption{Leveraging History}                                                       
    \label{fig:leverage_history}                                                                    
\end{minipage}                                                                                      
\hspace{1mm}                                                                                        
\begin{minipage}[t]{0.3\linewidth}                                                                  
\centering                                                                                          
    \includegraphics[width = 55mm, height = 32mm]{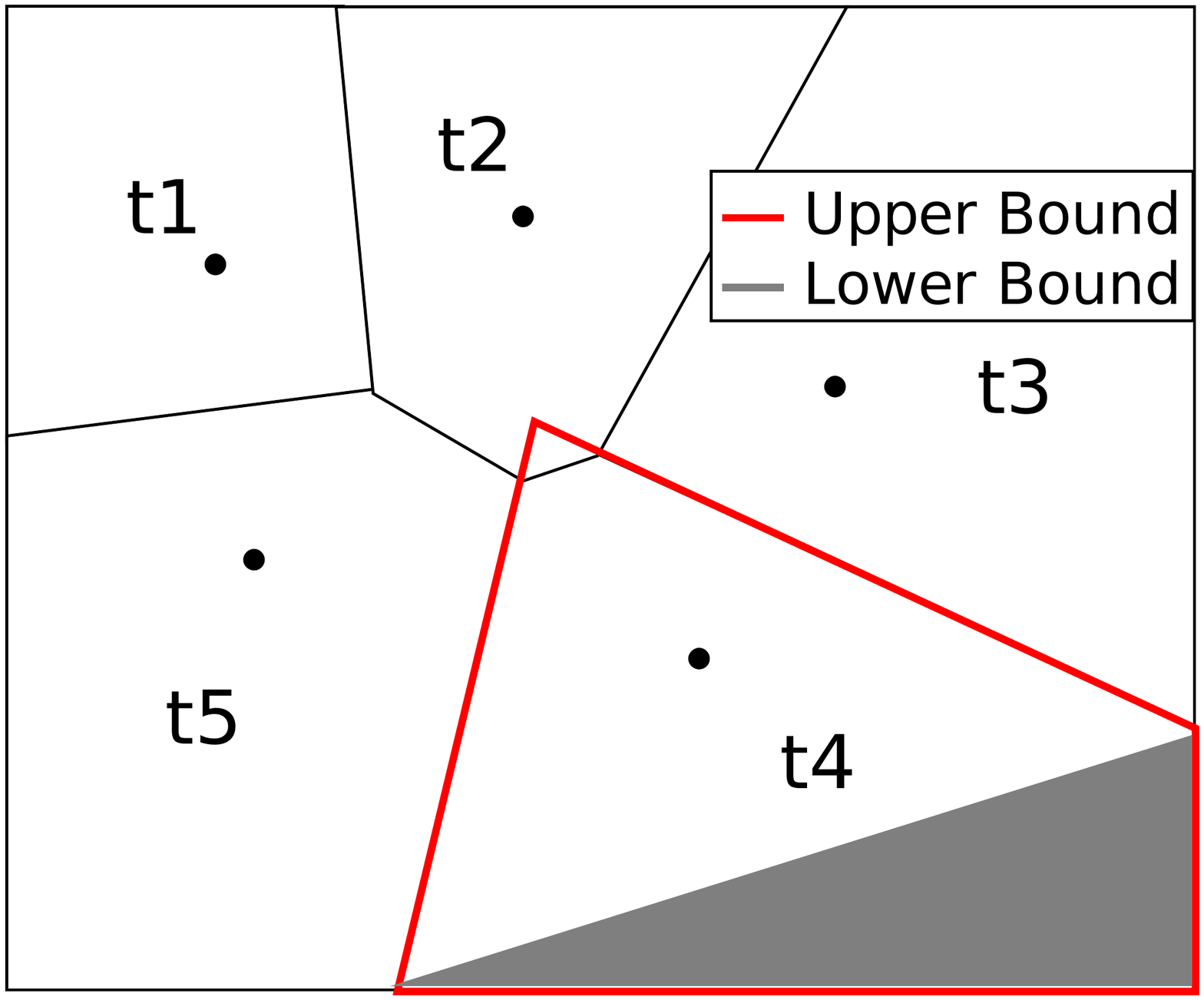}                      
    \vspace{-6mm}\caption{Upper/Lower Bounds}                                                       
    \label{fig:upper_lower_bounds}                                                                  
\end{minipage}                                                                                      
\hspace{-2mm}                                                                                       
\end{figure*}

\subsubsection{Faster Initialization} \label{sec:fip}

A key observation from the design in \S\ref{sec:lrk} is its bottleneck: the initialization process. At the beginning, we know nothing about the database other than (1) the location of tuple $t$, and (2) a large bounding box corresponding to the area of interest for the aggregate query. Naturally, $D^\prime = \{t\}$, leading to the initial Voronoi cell being the bounding box, and our first four queries being the corners of these bounding boxes. Of course, the tentative Voronoi cell will quickly close in to the real one with speed close to a binary search - i.e., the average-case query cost is at $\log$ scale of the bounding box size. Nonetheless, the initialization process can still be very costly, especially when the bounding box is large.

To address this problem, we develop a {\em faster initialization} technique which features a simple idea: Instead of starting with $D^\prime = \{t\}$, we insert four fake tuples into $D^\prime$, say $D^\prime = \{t, t^\mathrm{F}_1, \ldots, t^\mathrm{F}_4\}$, where $t^\mathrm{F}_1$, $\ldots$, $t^\mathrm{F}_4$ form a bounding box around $t$. The size of the bounding box should be conservatively large - even though a wrongly set size will not jeopardize the accuracy of our computation - as we shall show next.

By computing the initial Voronoi cell from $D^\prime$ and then issue queries corresponding to its vertices, there are two possible outcomes: One is that these queries return enough real tuples (besides $t$, of course) that, after excluding the fake ones from $D^\prime$, we still get a bounded Voronoi cell for $t$. One can see that, in this case, we can simply continue the computation while having saved a significant number of initialization queries. The other possible outcome, however, is when the bounding box is set too small, and we do not have enough real tuples to ``bound'' $t$ with a real Voronoi cell. Specifically, in the extreme-case scenario, all four vertices of the initial Voronoi cell could return $t$ itself. In this case, we simply revert back to the original design, wasting nothing but four queries.

One can see that the faster initialization process still guarantees the exact computation of a tuple's Voronoi cell. It has the potential to save a large amount of initialization queries in the average-case scenario, while in the worst case, it wastes at most four queries.
Algorithm~\ref{alg:fast_init} provides the pseudocode for faster initialization strategy.

\begin{algorithm}[!htb]
\caption{{\bf Fast-Init}}
\begin{algorithmic}[1]
\label{alg:fast_init}
\STATE {\bf Input:} $t$; \qquad {\bf Output:} $V(t)$
\STATE $D'=\{t, t_1^F, t_2^F, t_3^F, t_4^F\}$; Update $V(t)$ based on $D'$
\STATE {\bf If} all queries over vertices of $V(t)$ return $t$, {\bf then return} $V_0$
\STATE {\bf repeat} till $D'$ does not change between iterations
    \STATE \hindent {\bf for} each vertex $v$ of $V(t)$: $D'=D'\cup$ query$(v)$
    \STATE \hindent Update $V(t)$ from $D'$
\STATE {\bf return} $V(t)$
\end{algorithmic}
\end{algorithm}

{\bf Example 3:}
Figures~\ref{fig:fast_init_succ} and \ref{fig:fast_init_fail} show two different scenarios where the strategy is successful and not successful respectively
based on whether the bounding box due to fake tuples is conservatively large.
Given a small dataset with tuples $\{t_1, \ldots, t_5\}$, we initialize them with a bounding box corresponding to fake tuples $\{f_1, \ldots, f_4\}$.
In Figure~\ref{fig:fast_init_succ}, the initial bounding box is tight enough 
and results in the computation of the precise $V(t_4)$ with much lower query cost 
(i.e. only tuples $\{t_3, t_5\}$ are visited as against tuples $\{t_2, t_4, t_5\}$ for the example of Algorithm~\ref{alg:lrlbsagg_baseline}.
On the other hand, if the bounding box is not tight (as in Figure~\ref{fig:fast_init_fail}),
then queries over all the vertices of the bounding box return $t_4$.
We then revert back to the original bounding box $V_0$ that covers the entire region.

\subsubsection{Leverage history on Voronoi-cell computation} \label{sec:lhi}
Another natural optimization is to leverage the information that is gleaned from computing the Voronoi cells of past tuples
to compute a tighter initial Voronoi cell.
Recall that our algorithm to compute Voronoi cell of a tuple $t$ (i.e $V(t)$), using Theorem~\ref{thm:lr} starts with an initial Voronoi cell
that is an extremely large bounding box that covers the entire plane that then converges to $V(t)$.
In the process of computing this Voronoi cell, our algorithm retrieved additional new tuples (by issuing queries for each vertex of the bounding box).
Notice that for a LBS with static tuples (such as POIs in Google Maps), the results of location query ordered by distance remains static.
Hence it is not necessary to restart every iteration of the algorithm with the same large bounding box.
Specifically, when computing the Voronoi cell for the next tuple, we could leverage history by
starting with a ``tighter'' initial bounding box whose vertices are the set of tuples that we have seen so far.
In other words, we reuse the tuples that we have seen so far and make them as input to further rounds.
Notice that this approach remains the same for both $k=1$ and $k>1$.
Since the location of each tuple in top-$k$ are returned in LR-LBS, each of these tuples could be leveraged.
As we see more tuples, the initial Voronoi cell becomes more granular
resulting in substantial savings in query cost.
Algorithm~\ref{alg:leverage_history} provides the pseudocode for the strategy.
While the pseudocode uses the simple perpendicular bisector half plane approach\cite{de2000computational},
it could also use more sophisticated approaches such as Fortune's algorithm\cite{de2000computational}
to compute the bounding box around tuple $t$ using the tuples from historic queries.

\begin{algorithm}[!htb]
\caption{{\bf Leverage-History}}
\begin{algorithmic}[1]
\label{alg:leverage_history}
\STATE {\bf Input:} $t$ and $H$ (set of tuples obtained from historic queries)
\STATE {\bf Output:} Bounding box $V'(t)$
\STATE $V'(t) = V_0$
\STATE {\bf for} each tuple $h \in H$
    \STATE \hindent Update $V'(t)$ with perpendicular bisector between $h$ and $t$ 
\STATE {\bf return} $V'(t)$ with the tightest bounding box around $t$
\end{algorithmic}
\end{algorithm}

{\bf Example 4:}
As part of computing $V(t_4)$ (see Example~1), we have the locations of $t_2, \ldots, t_5$.
Using this information, we can compute the initial bounding box for $t_2$ (shown in red around $t_2$ in Figure~\ref{fig:leverage_history})
{\em offline} - i.e. without issuing any queries.

\subsubsection{Variance reduction with larger $k$}\label{subsec:lrkgt1}

When the system has $k > 1$, we can of course still choose to use the top-1 Voronoi cell as if only the top result is returned. Or we can choose from any of the top-$h$ Voronoi cells as long as $h \leq k$. While intuitively it might appear that using all $k$ returned tuples is definitely better than using just the top-1, the theoretical analysis suggests otherwise - indeed, whether top-1 or top-$h$ Voronoi cell is better depends on the exact aggregate being estimated - specifically, whether the distribution of the attribute being aggregated is better ``aligned'' with the size distribution of top-1 or top-$h$ Voronoi cells. To see why, simply consider an extreme-case scenario where the aggregate being estimated is AVG(Salary), and the salary of each user (tuple) is exactly proportional to the size of its top-1 Voronoi cell. In this case, nothing beats using the top-1 Voronoi cells as doing so produces zero variance and thus zero estimation error.

Having said that, however, many aggregates can indeed be better estimated using top-$h$ Voronoi cells, because the sizes of these top-$h$ cells are more uniform than those of the top-1 cells, which can vary extremely widely (see Figure~\ref{fig:starbucksVoronoiDiag} in the experiments section for an example), while many real-world aggregates are also more uniformly distributed than the top-1 cell volume (again, see experiments for justification). But simply increasing $h$ also introduces an undesired consequence: recall from \S\ref{sec:pre} that the larger $h$ is, the more ``complex'' the top-$h$ Voronoi cell becomes - in other words, the more queries we have to spend in order to pin down the exact volume of the Voronoi cell.

Thus, the key is to make a proper tradeoff between the benefit received (i.e., smaller variance per sample) and the cost incurred (i.e., larger query cost per sample). Our main idea is a combination of two methods: leveraging history in \S\ref{sec:lhi} and upper/lower bound approximation in \S\ref{sec:lub}. Specifically, for each of the $k$ returned tuples, we perform the following process:

Consider $t_i$ returned as the No.~$i$ result. We need to decide which version of the Voronoi cell definition to use for $t_i$. The answer can be anywhere from 1 to $k$. To make the determination, for all $h \in [2, k]$, we compute $\lambda_h(t_i)$, the upper bound on the volume of the top-$k$ Voronoi cell of $t_i$, as computed from all historically retrieved tuples. Then, we choose the largest $h$ which satisfies $\lambda_h(t_i) \leq \lambda_0$, where $\lambda_0$ is a pre-determined threshold (the intuitive meaning of which shall be elaborated next). Let the chosen value be $h(t_i)$. If none of $h \in [2, k]$ satisfies the threshold, we take $h(t_i) = 1$. Then, if $h(t_i) \leq i$, we compute the top-$h$ Voronoi cell for $t_i$. The final estimation from the $k$ returned results becomes:
\begin{align}
\sum_{t_i: h(t_i) \leq i \leq k} \frac{Q(t_i)}{|V_h(t_i)|} \label{equ:es2}
\end{align}
for any SUM or COUNT query $Q$, where $|V_h(t_i)|$ is the volume for the top-$h$ Voronoi cell of $t_i$.

We now explain the intuition behind the above approach, specifically the threshold $\lambda_0$. First, note that if the top-$h$ (say top-1) Voronoi cell of $t_i$ is already large, then there is no need to further increase $h$. The reason can be observed from the above-described justification of variance reduction - note that a large top-1 Voronoi cell translates to a large selection probability $p$ - i.e., a small $1/p$ which adds little to the overall variance. Further increasing $h$ not only contributes little to variance reduction, but might actually increase the variance if $1/p$ is already below the average value.

Second, admittedly, $\lambda_h(t_i)$ is only an upper-bound estimate - i.e., even though we showed above that an already large top-$h$ Voronoi cell does not need to have $h$ further increased, there remains the possibility that $\lambda_h(t_i)$ is large because of an overly loose bound (from history), rather than the real volume of the Voronoi cell. Nonetheless, note that this is still a negative signal for using such a large $h$ - as it means that we have not thoroughly explored the neighborhood of $t_i$. In other words, we may need to issue many queries in order to reduce our estimation (or computation) of $|V_h(t_i)|$ from $\lambda_h(t_i)$ to the correct value. As such, we may still want to avoid using such a large $h$ in order to reduce the query cost.

While the above explanation is heuristic in nature 
it is important to note that, regardless of how we set $h(t_i)$, the estimation we produce for SUM and COUNT aggregates in (\ref{equ:es2}) is always unbiased.

\begin{algorithm}[!htb]
\caption{{\bf Variance-Reduction}}
\begin{algorithmic}[1]
\label{alg:variance_reduction}
\STATE {\bf Input:} $H$;  {\bf Output:} Aggregate estimate from all top-$k$ tuples 
\STATE $q$ = location chosen uniformly at random 
\STATE {\bf for} each tuple $t_i$ returned from query($q$)
    \STATE \hindent $h(t_i)$ = $\max \{h | h \in [2, k], \lambda_h(t_i) \leq \lambda_0 \}$
    \STATE \hindent $h(t_i) = 1$ if no $h$ satisfied the condition $\lambda_h(t_i) \leq \lambda_0$
    \STATE \hindent Generate estimate for $t_i$ using Equation~\ref{equ:es2}
\end{algorithmic}
\end{algorithm}

\subsubsection{Upper/lower bounds on Voronoi-cell} \label{sec:lub}

Note that in the entire process of Voronoi-cell computation (barring the very first step of the faster initialization idea discussed in \S\ref{sec:fip}), we maintain a tentative polygon that covers the entire real Voronoi cell - i.e., an upper bound on its volume. What often arises in practice, especially when computing top-$k$ Voronoi cells (which tend to have many edges), is that even though the bounding polygon is very close to the real Voronoi cell in volume, it has far fewer edges - meaning we still need to issue many more queries to pin down the exact Voronoi cell.

The key idea we develop here is to avoid such query costs {\em without} sacrificing the accuracy of our aggregate estimations.  Specifically, consider a simple Monte Carlo approach which chooses uniformly at random a point from the current bounding polygon, and then issues a query from that point. If the query returns $t$ - i.e., it is in the Voronoi cell of $t$, we stop. Otherwise, we repeat this process. Interestingly, the number of trials it takes to reach a point that returns $t$, say $r$, is an unbiased estimation of $|V^\prime(t)|/|V(t)|$, where $|V^\prime(t)|$ and $|V(t)|$ are the volumes of the bounding polygon and the real Voronoi cell of $t$, respectively. 
\begin{align*}
\mathrm{Exp}(r) &= \sum^{\infty}_{i=1} \left[i \cdot \left(1 - \frac{|V(t)|}{|V^\prime(t)|}\right)^{i-1} \cdot \frac{|V(t)|}{|V^\prime(t)|}\right] = 
\frac{|V^\prime(t)|}{|V(t)|}.
\end{align*}

In other words, we can maintain the unbiasedness of our estimation without issuing the many more queries required to pin down the exact Voronoi cell. Instead, when $V^\prime(t)$ is close enough to $V(t)$, we can simply use call upon above-described method which, in most likelihood, requires just one more query to produce an unbiased SUM or COUNT estimation. For example, we can simply multiply the number of trials $r$ by $|V_0|/|V^\prime(t)|$, where $|V_0|$ is the volume of the entire region under consideration, to produce an unbiased estimation for COUNT(*). Other SUM and COUNT aggregates can be estimated without bias in analogy.

Before concluding this idea, there is one more optimization we can use here: a lower bound on the top-$k$ Voronoi cell of $t$. In the following, we first discuss how to use such a lower bound to further reduce query cost, and then describe the idea for computing such a lower bound. Note that once we have knowledge of a region $R$ that is covered entirely by the real (top-$k$) Voronoi cell, if in the above process, we randomly choose a point $q$ (from $V^\prime(t)$) which happens to belong in $R$, then we no longer need to actually query $q$ - instead, we immediately know that $q$ must belong to $V(t)$ and can produce an unbiased estimation accordingly. This is the cost saving produced by knowledge of a lower bound $R$.

To understand how we construct this lower bound region, a key understanding is that, at anytime during the execution of our algorithm, we have tested certain vertices of $V^\prime(t)$ which are already confirmed to be part of $V(t)$. Consider such a vertex $v$. Let $C(v, t)$ be a circle with $v$ being the center and the distance between $t$ and $v$ being the radius. Note that we are guaranteed to have {\em observed all tuples} within $C(v, t)$. This essentially leads to a lower-bound estimation of $V(t)$. Specifically, a point $q$ is in this lower-bound region if and only if $C(q, t)$, i.e., a circle centered on $q$ with radius being the distance between $q$ and $t$, is entirely covered by the union of $C(v, t)$ for all vertices $v$ of $V^\prime(t)$ that have been confirmed to be within $V(t)$. As such, for any $q$ in this region, we can save the query on it in the above process.

{\bf Example 5:}
The upper bound $V'(t_4)$ of $V(t_4)$ after Step 3 in the Example~2 (i.e. run-through of Algorithm LR-LBS-AGG-Baseline)
is shown in Figure~\ref{fig:upper_lower_bounds} as a quadrilateral with red edges.
The three lower vertices of $V'(t_4)$ are guaranteed to be in $V(t_4)$ using the criteria described above 
and hence the polygon induced by them provides a lower bound estimate for $V(t_4)$.

\subsection{Algorithm LR-LBS-AGG}

By combining the baseline idea for precisely computing the Voronoi cells with the 4 techniques for error reduction,
we can design an efficient algorithm LR-LBS-AGG for aggregate estimation over LR-LBS.
Algorithm~\ref{alg:lrlbsagg} shows the pseudocode for LR-LBS-AGG.

\begin{algorithm}[!htb]
\caption{{\bf LR-LBS-AGG}}
\begin{algorithmic}[1]
\label{alg:lrlbsagg}
\STATE {\bf while} query budget is not exhausted
    \STATE \hindent $q$ = location chosen uniformly at random
    \STATE \hindent {\bf for} each tuple $t_i$ in query($q$)
        \STATE \hindent[2] Compute optimal $h$ for $t_i$
        \STATE \hindent[2] Construct initial $V_h(t_i)$ using Algorithms~\ref{alg:fast_init} and \ref{alg:leverage_history}
        \STATE \hindent[2] $D'$= vertices of $V_h(t_i)$ 
        \STATE \hindent[2] {\bf repeat} till $D'$ is not updated or Voronoi bound is tight
            \STATE \hindent[3] {\bf for} each vertex $v$ of $V_h(t_i)$: $D'=D'\cup$ query$(v)$
            \STATE \hindent[3] Update $V_h(t_i)$ and $V'_h(t_i)$ from $D'$ 
\STATE Produce aggregate estimation using samples
\end{algorithmic}
\end{algorithm}

\section{LNR-LBS-AGG}\label{sec:lnrlbsagg}

\subsection{Voronoi Cell Computation: Key Idea}\label{subsec:lnrk}

We now consider the case where only a ranked order of points are returned - but not their locations. We shall start with the case of $k = 1$, and then extend to the general case of $k > 1$. 

We start by defining a primitive operation of ``binary search'' as follows.  Consider the objective of finding the Voronoi cell of a tuple $t$ in the database. Given any location $c_1$ and $c_2$ (not necessarily occupied by any tuple), where $c_1$ returns $t$, consider the half-line from $c_1$ passing through $c_2$. Since a Voronoi cell is convex and $c_1$ resides within the Voronoi cell, this half-line has one and only one intersection with the Voronoi cell - which is associated with one or two edges of the Voronoi cell. We define the primitive operation of {\em binary search} for given $c_1, c_2$ to be the binary search process of finding one Voronoi edge associated with the intersection. Please refer to Appendix~\ref{sec:appendixBSearch} for the detailed design of this process.

Naturally, such a binary search process is associated with an error bound on the precision of the derived edge. For example, we can set an upper bound $\epsilon$ on the maximum distance between any point on the real Voronoi edge (i.e., a line segment) and its closest point on the derived edge, which we refer to as the {\em maximum edge error}, and use $\epsilon$ as the objective of the binary search operation. One can see that the number of queries required for this binary search is proportional to $\log(1/\epsilon)$. 
See Appendix~\ref{sec:appendixBSearch} for exact query cost.

Given this definition, we now show that one can discover the Voronoi cell of $t$ (up to whatever precision level afforded to us by the binary search operation) with a query complexity of $O(m \log(1/\epsilon))$, where $m$ is the number of edges for the Voronoi cell. Here is the corresponding process:

We start with one query at point $q$ which returns $t$. Then, we construct $4$ points that bound $q$ (say $q_1: \langle x(q)-1,y(q)\rangle$, $q_2: \langle x(q)+1,y(q)\rangle$, $q_3: \langle x(q),y(q)-1\rangle$, $q_4: \langle x(q),y(q)+1\rangle$, where $x(\cdot)$ and $y(\cdot)$ are the two dimensions, e.g., longitude and latitude, of a location, respectively) and call upon the binary search operation to find the corresponding Voronoi edges intercepting the half lines from $q$ to $q_1, \ldots, q_4$, respectively. One can see that, no matter what the discovered edges might be, they must form a closed polygon\footnote{In the extreme-case, some edges of this polygon might be part of the bounding box.} which we can use to initiate the testing process described in \S\ref{sec:lrk}. If all vertices pass the test, then we have already obtained the Voronoi cell of $t$. Otherwise, for each vertex (say $v$) that fails the test, we perform the binary search operation on the location of $v$ to discover another Voronoi edge. We repeatedly do so until all vertices pass the test - at which time we have obtained the real Voronoi cell - subject to whatever error bound specified for the binary search process (as described above).

To compute the query cost of this process, a key observation is that each call of the binary search process after the initial step (i.e., a call caused by a vertex failing the test) increases the number of discovered (real) edges for the Voronoi cell by 1. Thus, the number of times we have to call the binary search process is $O(m)$, leading to the overall query-cost complexity of $O(m \log(1/\epsilon))$. For the estimation error, we have the following theorem.
\begin{theorem}
The estimation bias for COUNT(*) is at most
\begin{align}
|E(\tilde{\theta} - \theta)| \leq \sum_{t \in D} \frac{\epsilon^2 - 2 \cdot d(t) \cdot \epsilon}{(d(t) - \epsilon)^2},
\end{align}
where $d(t)$ is the nearest distance between $t$ and another tuple in $D$, and $\epsilon$ is the aforementioned maximum edge error.
\end{theorem}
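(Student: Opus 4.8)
The plan is to track how the binary-search approximation of each Voronoi cell perturbs the unbiased estimator of \S\ref{sec:lrk}, and then convert the geometric edge error $\epsilon$ into a multiplicative error on cell volume. Recall from~(\ref{equ:uce}) that with \emph{exact} cells the COUNT(*) estimator $\tilde\theta = |V_0|/|V(t)|$ satisfies $\mathrm{Exp}(\tilde\theta)=\sum_{t\in D} p(t)\cdot 1/p(t) = |D| = \theta$. In the LNR setting we still sample $t$ with its \emph{true} probability $p(t)=|V(t)|/|V_0|$ (the query returns the genuine nearest neighbor), but we normalize by the \emph{approximate} cell volume $|\tilde V(t)|$ produced by binary search, so $\tilde\theta=|V_0|/|\tilde V(t)|$ and $\mathrm{Exp}(\tilde\theta)=\sum_{t\in D}|V(t)|/|\tilde V(t)|$. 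The bias therefore factorizes cleanly over tuples,
\[
E(\tilde\theta-\theta)=\sum_{t\in D}\left(\frac{|V(t)|}{|\tilde V(t)|}-1\right)=\sum_{t\in D}\frac{|V(t)|-|\tilde V(t)|}{|\tilde V(t)|},
\]
so the whole problem reduces to a per-tuple bound on the volume ratio $|V(t)|/|\tilde V(t)|$ in terms of $\epsilon$ and $d(t)$.

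The second step is the geometric heart of the argument. The binary-search primitive guarantees that every derived edge lies within maximum edge error $\epsilon$ of the corresponding true edge, so $\tilde V(t)$ is a bounded perturbation of the convex cell $V(t)$; stopping the search on a confirmed-interior witness makes $\tilde V(t)$ an inner approximation, $\tilde V(t)\subseteq V(t)$. To convert a boundary perturbation of size $\epsilon$ into a volume ratio I would compare $V(t)$ and $\tilde V(t)$ through a homothety centered at $t$: the distance scale controlling how close any Voronoi edge can sit to $t$ is the nearest-neighbor distance $d(t)$, so a scaling of $V(t)$ about $t$ by a factor $1-\Theta(\epsilon/d(t))$ should be contained in $\tilde V(t)$. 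Since planar area scales as the square of the linear factor, this yields a bound of the form $|V(t)|/|\tilde V(t)|\le d(t)^2/(d(t)-\epsilon)^2$, and hence the per-tuple estimate
\[
\frac{|V(t)|}{|\tilde V(t)|}-1\ \le\ \frac{d(t)^2-(d(t)-\epsilon)^2}{(d(t)-\epsilon)^2}\ =\ \frac{2\,d(t)\,\epsilon-\epsilon^2}{(d(t)-\epsilon)^2},
\]
the squared denominator being exactly the two-dimensional scaling.

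Summing the per-tuple bound over all $t\in D$ produces the stated inequality; the right-hand side in the theorem is this quantity up to the sign fixed by the approximation direction (an inner approximation shrinks each cell, so COUNT(*) is \emph{over}-estimated, $E(\tilde\theta-\theta)\ge 0$, with magnitude $\sum_{t\in D}(2d(t)\epsilon-\epsilon^2)/(d(t)-\epsilon)^2$). The whole argument is predicated on $\epsilon<d(t)$ — indeed $\epsilon$ small relative to even the sparsest local neighborhood — which keeps every denominator positive and the inner approximation nondegenerate.

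I expect the main obstacle to be making the homothety lemma rigorous for an arbitrary convex cell rather than a disk, and in particular pinning the exact constant on the controlling distance. The closest edge of $V(t)$ is the perpendicular bisector toward the nearest tuple and sits at distance $d(t)/2$, not $d(t)$, so a naive perpendicular argument must pull that edge in by a full $\epsilon$, which seems to demand the factor $(d(t)-2\epsilon)/d(t)$ and a denominator $(d(t)-2\epsilon)^2$. Matching the theorem's cleaner $(d(t)-\epsilon)^2$ requires reconciling this factor of two — for instance by measuring the binary-search error radially against the circles $C(v,t)$ of \S\ref{sec:lub}, whose radii are governed by $d(t)$, rather than as a perpendicular edge displacement — and verifying that a single scaling factor simultaneously dominates the perturbation of every edge. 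This reconciliation, not the summation or the expectation calculation, is where the real work lies.
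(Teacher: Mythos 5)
Your route is exactly the one the paper intends: the paper states this theorem without an explicit proof, and the argument it relies on is your two steps --- factor the bias over tuples as $E(\tilde\theta-\theta)=\sum_{t\in D}\bigl(|V(t)|/|V'(t)|-1\bigr)$ (valid because the interface still returns the \emph{true} nearest neighbor, so $t$ is sampled with probability $|V(t)|/|V_0|$ while the estimator normalizes by the approximate volume), then apply a per-tuple volume-ratio bound for the inner approximation, which is precisely Corollary~\ref{thm:ee3} in the appendix. Given that corollary, your algebra $d(t)^2/(d(t)-\epsilon)^2-1=(2d(t)\epsilon-\epsilon^2)/(d(t)-\epsilon)^2$ reproduces the theorem, and both of your side observations are correct: the printed numerator $\epsilon^2-2d(t)\epsilon$ is negative for $\epsilon<2d(t)$ and so cannot bound an absolute value (a sign typo), and the bias is one-sided (an overestimate) because each estimated cell is a subregion of the true one.

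Your unresolved ``factor of two'' worry, however, is not a defect of your write-up relative to the paper --- it is a genuine flaw in the paper's own (unproven) Corollary~\ref{thm:ee3}, and you should not expect to reconcile it. The controlling radius really is $d(t)/2$: every edge of $V(t)$ lies on a perpendicular bisector at distance at least $d(t)/2$ from $t$, with equality for the nearest neighbor's bisector. The radial form of the error bound in Theorem~\ref{thm:eee} gives $\rho'(\phi)\geq\rho(\phi)-\epsilon$ for the distance from $t$ to the estimated boundary in each direction $\phi$, and since $(1-\epsilon/x)^2$ is increasing in $x$, the worst direction is the one with $\rho=d(t)/2$, yielding
\begin{align*}
\frac{|V'(t)|}{|V(t)|}\;\geq\;\left(\frac{d(t)-2\epsilon}{d(t)}\right)^2 ,
\end{align*}
and this is tight: if $t$ sits at the center of a ring of tuples all at distance $d(t)$, its cell is essentially a disk of radius $d(t)/2$, and a legal estimate (every derived edge within edge error $\epsilon$ of the true edge) can be the disk of radius $d(t)/2-\epsilon$. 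Re-measuring the error against the circles $C(v,t)$ does not rescue the constant, since in this example the cell's vertices are themselves at distance about $d(t)/2$ from $t$. Consequently the bound your route (and the paper's) actually proves is
\begin{align*}
|E(\tilde\theta-\theta)|\;\leq\;\sum_{t\in D}\frac{4\,d(t)\,\epsilon-4\epsilon^2}{(d(t)-2\epsilon)^2},
\end{align*}
which is qualitatively the same (bias vanishing as $\epsilon\to 0$ at log-scale query cost) but weaker than the printed constant. Your instinct that the reconciliation ``is where the real work lies'' is right; the correct conclusion is that it cannot be done, and the theorem's statement should be repaired instead.
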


Estimation bias for other aggregates can be derived accordingly (given the distribution of the attribute being aggregated). One can make two observations from the theorem: First, the smaller maximum edge error $\epsilon$ is or the large inter-tuple distance $d(t)$ is, the smaller the bias will be. Second, we can make the bias arbitrarily small by shrinking $\epsilon$ - which leads to a log-scale increase of the query cost.

Algorithm~\ref{alg:lnrlbsagg} shows the pseudocode for LNR-LBS-AGG that also utilizes some of the error reduction ideas from \S\ref{sec:rer}.

\begin{algorithm}[!htb]
\caption{{\bf LNR-LBS-AGG}}
\begin{algorithmic}[1]
\label{alg:lnrlbsagg}
\STATE {\bf while} query budget is not exhausted
    \STATE \hindent $q$ = location chosen uniformly at random; $t$=query($q$)
    \STATE \hindent Construct four points $q_1, \ldots , q_4$ bounding $t$ 
    \STATE \hindent $e_i$ = Binary-Search($q_i$) $\forall i \in [1, 4]$
    \STATE \hindent $V(t)$ = closed polygon from Voronoi edges $e_1, \ldots, e_4$
    \STATE \hindent $D'$= vertices of $V(t)$ 
    \STATE \hindent {\bf repeat} till $D'$ is not updated
        \STATE \hindent[2] {\bf for} each vertex $v$ of $V(t)$: $D'=D'\cup$ query$(v)$
        \STATE \hindent[2] Find Voronoi edges $\forall d \in D'$ and update $V(t)$
\STATE Produce aggregate estimation using samples
\end{algorithmic}
\end{algorithm}

{\bf Example 6:}
We consider the same dataset as Example~1, except that in LNR-LBS the locations of tuples are not returned.
Figure~\ref{fig:lnr_lbs_agg_illstration} shows a run-through of the algorithm by which one of the Voronoi edges of $V(t_4)$ is identified.
Initially, the bounding box contains the entire region, i.e. $V_0$. 
$\ell_1$ and $\ell_2$ are two lines starting from $t_4$ constructed as per Algorithm~\ref{alg:bsearch}.
$p_1$ and $p_2$ are mid points of small line segments on $\ell_1$ and $\ell_2$ 
such that points on either side of them return different tuples when queried.
The new estimated Voronoi edge is computed as the line segment connecting $p_1$ and $p_2$.
Please refer to Appendix-\ref{sec:appendixBSearch} for further details.


\begin{figure*}[ht]                                                                                 
\begin{minipage}[t]{0.4\linewidth}                                                                  
    \centering                                                                                      
    \includegraphics[height=35mm,width=50mm]{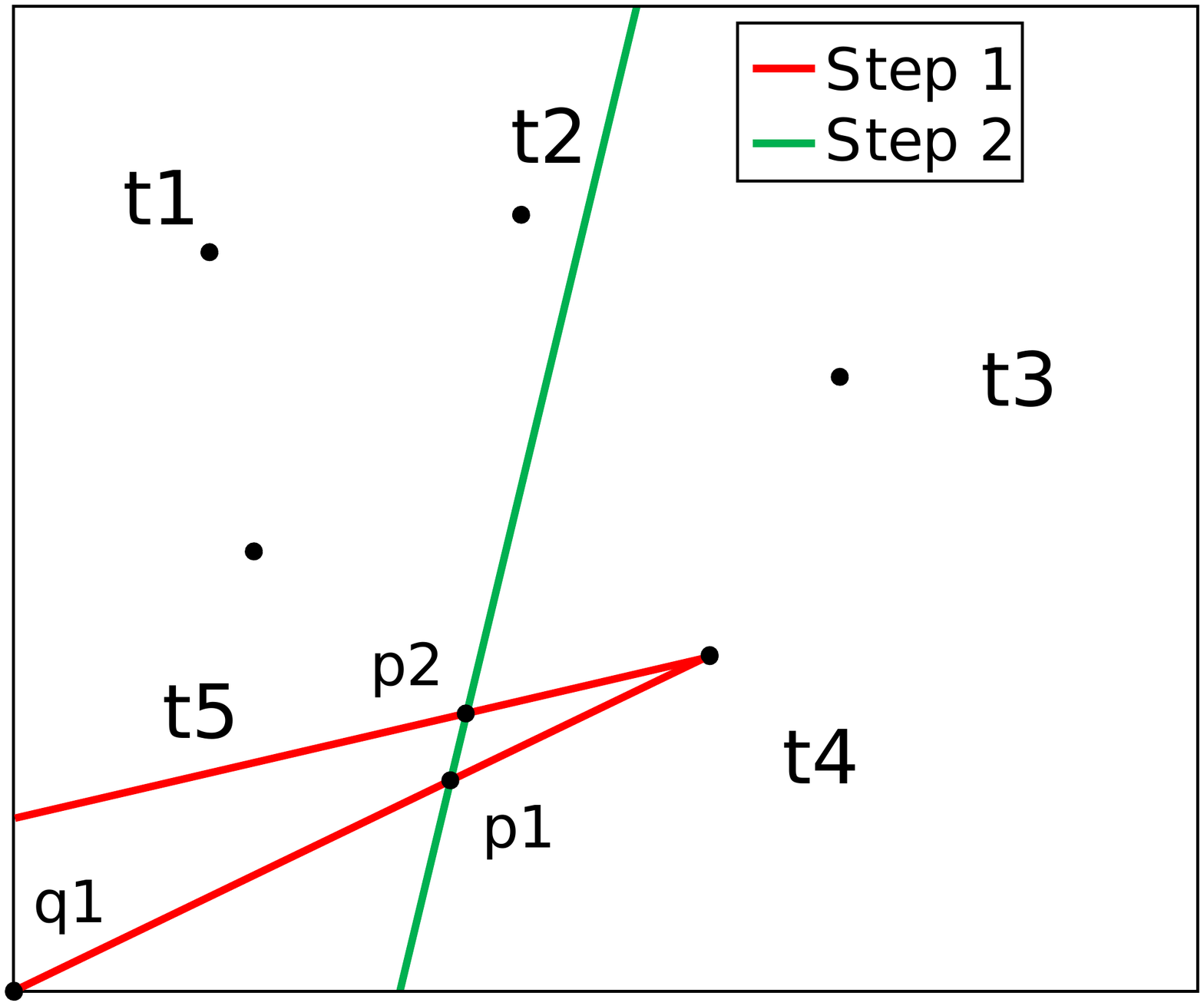}                                   
    \caption{Illustration of Algorithm LNR-LBS-AGG}                                                 
    \label{fig:lnr_lbs_agg_illstration}                                                             
\end{minipage}                                                                                      
\hspace{1mm}                                                                                        
\begin{minipage}[t]{0.55\linewidth}                                                                 
    \centering                                                                                      
    \subfigure{\includegraphics[height=30mm,width=40mm]{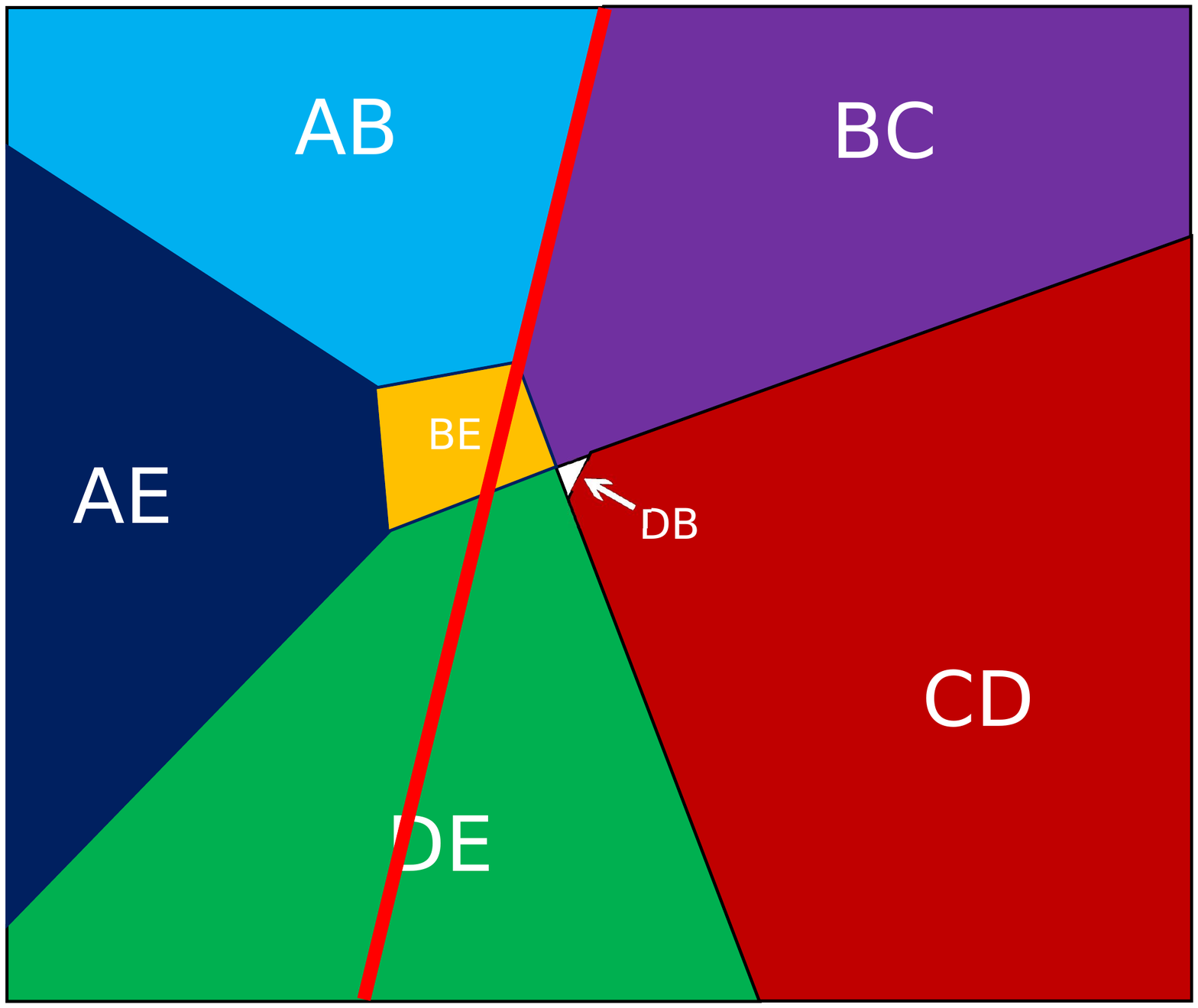}}\quad                  
    \subfigure{\includegraphics[height=30mm,width=40mm]{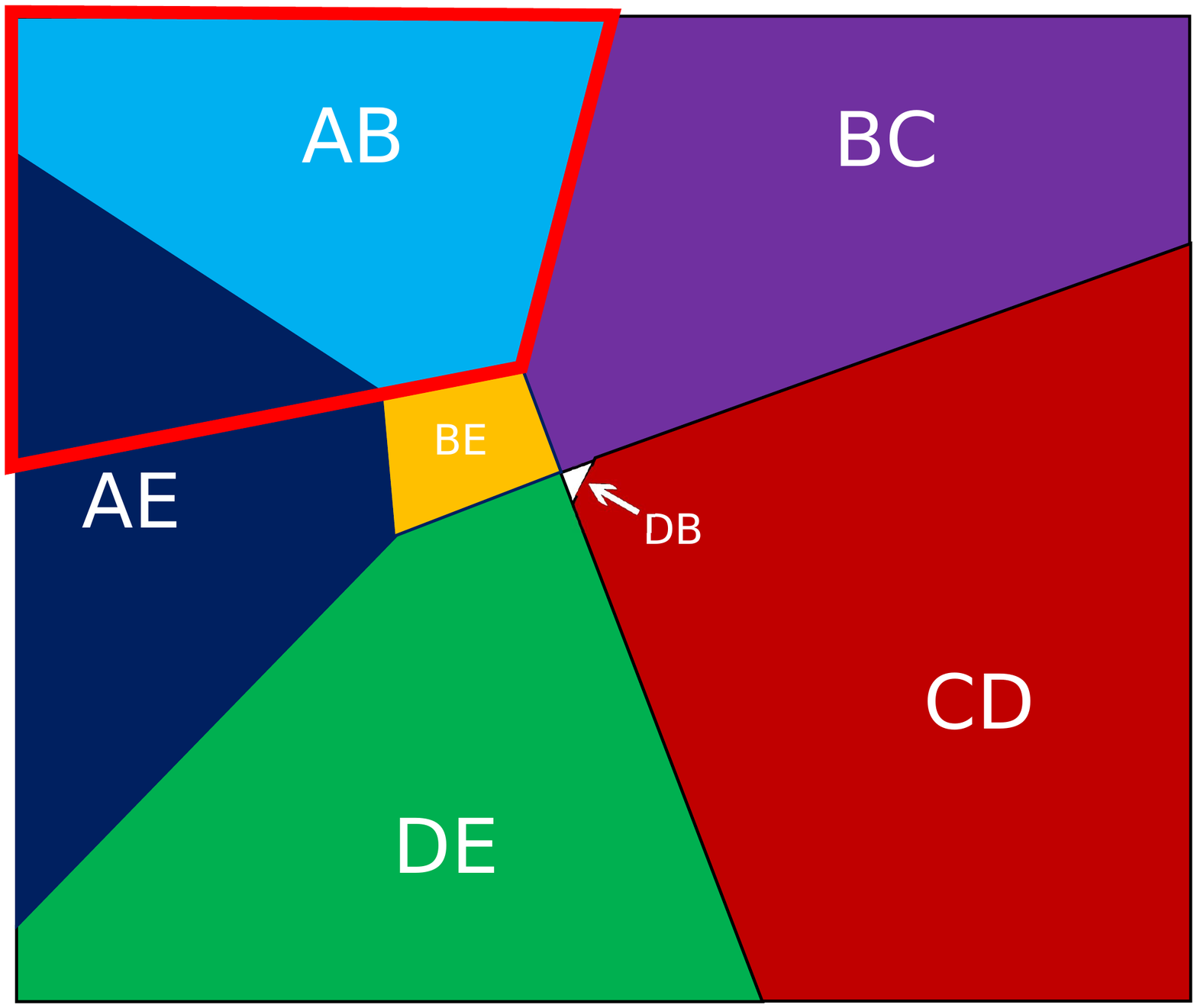}}                       
    \caption{Handling Concavity of top-$k$ Voronoi Diagrams}                                        
    \label{fig:lnrlbsaggConcave}                                                                    
\end{minipage}                                                                                      
\end{figure*}

\subsection{Extension to $k > 1$}\label{subsec:lnrkgt1}

A complication brought by the rank-only return semantics is the extension to cases with $k > 1$. Specifically, recall from \S\ref{sec:pre} that the (extended) top-$k$ Voronoi cell might be {\em concave} when $k > 1$. In the case LR-LBS case, this does not cause any problem because, at any moment, our derived top-$k$ Voronoi cell is computed from the exact tuple locations of all observed tuples and (therefore) completely covers the real top-$k$ Voronoi cell. For LNR-LBS case, however, this is no longer the case: Since we unveil the top-$k$ Voronoi cell edge after edge, if we happen to come across one of the ``concave edges'' early, then we may settle on a sub-region of the real top-$k$ Voronoi cell. Figure~\ref{fig:lnrlbsaggConcave} demonstrates an example for such a scenario.


Fortunately, there is an efficient fix to this situation. To understand the fix, a key observation is that any ``inward'' (i.e., concave) vertex of a top-$k$ Voronoi cell, say that of $t$, must be at a position with equal distance to three tuples, one of them being $t$ (Note: this might not hold for ``outward'' vertices). This property is proved in the following lemma.

\newtheorem{lemma}{Lemma}
\begin{lemma}
Any inward vertex of the top-$k$ Voronoi cell of $t$ must be of equal distance to $t$ and two other tuples in the database.
\end{lemma}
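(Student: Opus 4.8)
The plan is to reduce the statement to a clean description of $V_k(t)$ as a union of cells of the arrangement of perpendicular bisectors through $t$, and then to rule out the bounding box at inward vertices. First I would reformulate membership in $V_k(t)$: a query point $q$ returns $t$ among its top-$k$ results exactly when fewer than $k$ tuples are strictly closer to $q$ than $t$, i.e. $|\{s \in D\setminus\{t\} : d(q,s) < d(q,t)\}| \le k-1$. For each $s$, the sign of $d(q,s)-d(q,t)$ is determined solely by which side of the perpendicular bisector $\ell_{t,s}$ of $t$ and $s$ the point $q$ lies on. Hence this count, and therefore the predicate ``$q \in V_k(t)$'', is constant on every full-dimensional cell of the arrangement $\mathcal{A}$ formed by $\{\ell_{t,s} : s \in D\setminus\{t\}\}$, clipped to the box $B$. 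Consequently $V_k(t)$ is a union of cells of $\mathcal{A}$, and every edge of its boundary $\partial V_k(t)$ lies either on some bisector $\ell_{t,s}$ or on $\partial B$.

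Second, I would show that an \emph{inward} (reflex) vertex cannot be incident to a bounding-box edge, which forces both of its incident edges to be bisectors. Suppose a vertex $v$ is incident to an edge lying on $\partial B$, and let $\ell$ be the supporting line of that box edge. Since $B$ is convex and $V_k(t)\subseteq B$, the whole cell lies in the closed half-plane bounded by $\ell$. Near $v$ the cell is the wedge bounded by its two incident edges, and this wedge sits inside that half-plane with one of its sides running along $\ell$; a wedge contained in a half-plane whose bounding line carries one of its sides has opening angle at most $\pi$, so $v$ is convex, not inward. (If both incident edges lie on $\partial B$, then $v$ is a convex corner of the rectangle $B$.) By contraposition, at an inward vertex neither incident edge lies on $\partial B$, so both lie on bisectors, say $\ell_{t,s_1}$ and $\ell_{t,s_2}$.

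Finally I would conclude. The two incident edges are genuinely distinct edges, hence not collinear, so $\ell_{t,s_1}\neq\ell_{t,s_2}$ and $s_1\neq s_2$. Since $v\in\ell_{t,s_1}\cap\ell_{t,s_2}$ we obtain $d(v,t)=d(v,s_1)=d(v,s_2)$, i.e. $v$ is equidistant to $t$ and two other tuples, with general position guaranteeing exactly three (no four tuples concyclic). This also explains the caveat for outward vertices: a convex vertex may sit on $\partial B$, where it is equidistant to $t$ and only one (or no) other tuple.

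I expect the main obstacle to be the second step, ruling out box edges at inward vertices, and more generally making the local wedge reasoning rigorous. The subtlety is that at an inward vertex the cell is \emph{not} the intersection of the two incident half-planes (that would force convexity, contradicting the very existence of concave vertices); the non-convexity is created by the surrounding arrangement cells, not by the two incident bisectors alone. The supporting-line argument sidesteps this by invoking only $V_k(t)\subseteq B$ and convexity of $B$, which bounds the opening angle regardless of the internal cell structure. I would also need general position to guarantee degree-two vertices and to exclude four concyclic tuples.
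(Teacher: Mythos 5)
Your proof is correct, but it takes a genuinely different route from the paper's. The paper partitions the plane into \emph{base cells} (maximal regions whose points share the same top-$k$ set, i.e.\ order-$k$ Voronoi regions), notes that $V_k(t)$ is a union of such cells, and argues locally: under general position an inward vertex $v$ is formed by three base-cell edges, one of which ($e_1$) lies inside $V_k(t)$, while the other two separate inside from outside and must therefore each be a perpendicular bisector of $t$ and some other tuple --- giving equidistance of $v$ to $t$ and two others. You instead work with the coarser arrangement $\mathcal{A}$ of bisectors $\ell_{t,s}$ \emph{through $t$ only}, observing that membership in $V_k(t)$ depends solely on which side of each $\ell_{t,s}$ a query point lies on; this immediately yields the stronger global fact that \emph{every} boundary edge of $V_k(t)$ lies on some $\ell_{t,s}$ or on $\partial B$, after which the only remaining work is your supporting-line/wedge argument ruling out box edges at reflex vertices. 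Your approach buys two things the paper's does not: it makes explicit why outward vertices are the exceptional case (they can sit on $\partial B$, whereas any vertex off the box --- inward or outward --- is equidistant to $t$ and two other tuples), and it handles the bounding box rigorously, which the paper dismisses by fiat under ``general positioning (barring special positions such as bounding edges).'' The paper's decomposition, in turn, exposes the interior edge $e_1$ at the reflex vertex (the bisector of the two \emph{other} tuples), structural information your argument does not surface but which connects the lemma to the order-$k$ Voronoi machinery used elsewhere. Both arguments rely on general position to ensure degree-two boundary vertices and to exclude four concyclic tuples, and both are sound.
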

\begin{proof}
Consider a partition of the entire region into {\em base cells}, each of which returns a different combination of top-$k$ tuples. One can see that the top-$k$ Voronoi cell of $t$ must be the union of one or more adjacent base cells.  In addition, for general positioning (i.e., barring special positions such as bounding edges, etc.), any vertex of the top-$k$ Voronoi cell is formed by three edges (of some base cells in the partition). Now consider the three edges which form an inward vertex $v$, denoted by $e_1, e_2, e_3$.  Note that, given $v$ is inward, one of the three edges must be inside the top-$k$ Voronoi cell of $t$. Let this edge be $e_1$. One can see that both $e_2$ and $e_3$ separate the top-$k$ Voronoi cell from the outside - i.e., $\forall i \in \{1, 2\}$, we have locations on one side of $e_i$ returning $t$ in top-$k$ while locations on the other side do not. That is, each of $e_2$ and $e_3$ must be the perpendicular bisector of the line segment connecting $t$ and another tuple in the database. Let these two tuples be $t_2$ and $t_3$ for $e_2$ and $e_3$, respectively. In other words, $v$ must have equal distance to $t$, $t_2$ and $t_3$.
\end{proof}

Given this property, the extension to $k > 1$ becomes straightforward: Let $D^\prime$ be the set of all tuples we have observed which appear along with $t$ in the top-$k$ result of a query answer. Let $t \in D^\prime$. First, note that if the polygon we output is not the top-$k$ Voronoi cell of $t$, then it must be a sub-region of it missing at least one inward vertex. According to the above lemma, each inward vertex is formed by two perpendicular bisectors, each between $t$ and another tuple. A key observation here is that at least one of the {\em missed} inward vertices must be entirely formed by tuples in $D^\prime$.  The reason is simple: if no missed inward vertex satisfies this property, then we must have found the correct top-$k$ Voronoi cell of $t$ over $D^\prime$ - i.e., what we get so far must be a super-region of the correct top-$k$ Voronoi of $t$ over the entire database, contradicting our previous conclusion that it is a sub-region.

Now our task is reduced to finding such a missing inward vertex. Note that this is equivalent with finding the perpendicular bisector of $t$ and every other tuple in $D^\prime$ - as once these perpendicular bisectors are identified, the rest is simply getting their intersections which can be done offline. For each tuple in $D^\prime$, we either have already identified the perpendicular bisector through one of the previous calls to the binary search process - or we can initiate a new one as follows.

Specifically, to find the perpendicular bisector of $t$ and $t^\prime \in D^\prime$, note that $t^\prime$ being in $D^\prime$ means that (1) at least one of the vertices of the polygon we currently have must return $t^\prime$, and (2) at least one of the vertices of the polygon we currently have must not return $t^\prime$. In other words, there must exist an edge of our current polygon which has two vertices once returning $t^\prime$ and the other does not - i.e., this edge intercepts with the perpendicular bisector of $t$ and $t^\prime$. As such, we simply need to return the binary search process over this edge to find the perpendicular bisector, and then use it to update our polygon. We repeat this process iteratively until we have enumerated all perpendicular bisectors of $t$ and other nodes in $D^\prime$ - at which time we can conclude that there is no missing inward vertex. In other words, we have found the top-$k$ Voronoi cell of $t$. One can see that the query complexity of this process remains at $O(m \log(1/\epsilon))$, as every new binary search process called will return us a new edge for the top-$k$ Voronoi cell.

\subsection{Tuple Position Computation}\label{subsec:tuplePositionComputation}

Another important problem in the LNR-LBS case is the computation of a tuple's position, since such information is not returned in query answers as in the LR-LBS case. As discussed in the introduction, this problem can be of independent interest - it can also be called upon as a subroutine for aggregate query processing when the selection condition involves a tuple's location. For example, one might be interested in the number of WeChat users within 20 meters of major highways (i.e., those who are likely driving). To estimate this aggregate, we need to compute the location of a tuple (i.e., a WeChat user) in order to determine whether it satisfies the selection condition for the aggregate query.

\begin{figure}[ht]
\centering
\includegraphics[scale=0.4]{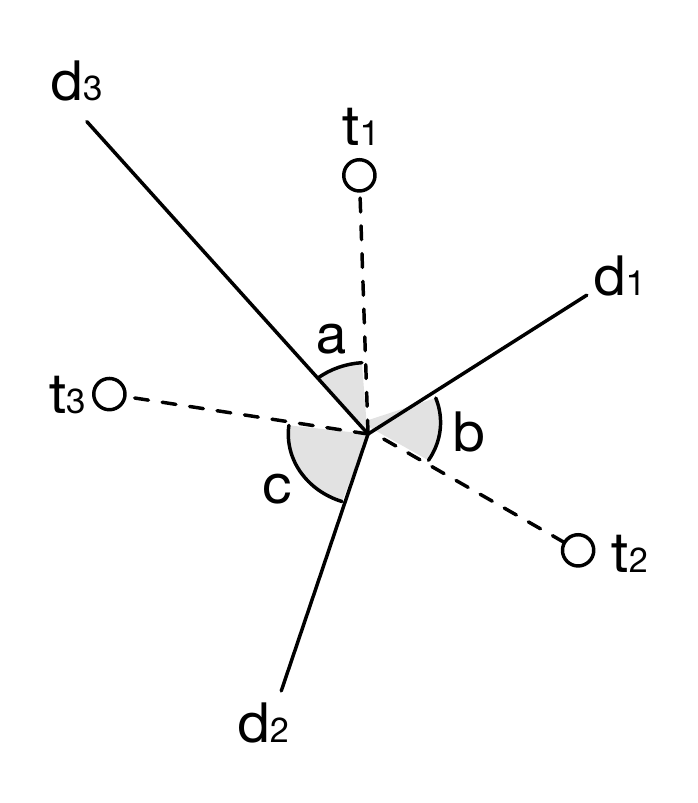}
\vspace{-5mm}
\caption{Demonstration of Tuple Position Computation}
\label{fig:r2l}
\end{figure}

Once we compute the Voronoi cell for a tuple $t$, the computation of $t$'s exact location takes only two additional calls to the binary search process. The key idea of this computation is demonstrated in Figure~\ref{fig:r2l}. The figure depicts one vertex of the top-1 Voronoi cell of $t_1$. Let the vertex (at the center of the figure) be the origin point $o$. The figure includes two edges of the Voronoi cell, $d_1$ and $d_3$, corresponding to the perpendicular bisector of $(t_1, t_2)$ and $(t_1, t_3)$, respectively. Note that since $o$ is of equal distance to $t_1$, $t_2$, and $t_3$, it must be attached to a third edge which is part of the Voronoi cell for $t_2$ and $t_3$ - this is depicted as $d_2$ in the figure.

In the following, we describe the computation of $t_1$'s location in three steps: First, we show that, with knowledge of $d_1$, $d_2$ and $d_3$, one can readily compute the line from $o$ to $t_1$ - i.e., the angle $a$ in the figure. Note that this indicates as long as one can do the same for another vertex of the Voronoi cell (say $o^\prime$), then the location of $t_1$ can be derived as the intersection of two lines: $(o, t_1)$ and $(o^\prime, t_1)$. Of course, in practice we only know $d_1$ and $d_3$ from the Voronoi cell computation, not $d_2$. Thus, we demonstrate in the second step that deriving $d_2$ from $d_1$ and $d_3$ takes only a single call to the binary search process.

First, to understand how angle $a$ can be derived from $d_1$, $d_2$, $d_3$, a key observation from Figure~\ref{fig:r2l} is that the lines from $o$ to any two tuples must form equal angle to the Voronoi edge between them - e.g., $(o, t_1)$ and $(o, t_3)$ must form equal angles to $d_3$. In other words, the angle between $(o, t_3)$ and $d_3$ is also $a$. Equipped with this observation, it becomes obvious that:
\begin{align}
(a + b) + (b + c) + (c + a) &= 2\pi\\
\Rightarrow a + b + c &= \pi
\end{align}
Since $b+c$ is exactly the angle between $d_1$ and $d_2$, we can easily compute $a$ as $\pi - (b+c)$. As such, we computed the line from $o$ to $t_1$ based on knowledge of only $d_1$, $d_2$ and $d_3$.

Now we explain how one can compute $d_2$ - the only one of the three edges not part of the Voronoi cell of $t_1$ - with a single call to the binary search process. Note from the fact that we have computed both $d_1$ and $d_3$ that we must have issues at least one query which returns $t_2$ as the top result, say $q_2$, and a query which returns $t_3$ on the top, say $q_3$. Obviously, $d_2$ intercepts the line segment between $q_2$ and $q_3$ exactly once. Thus, we simply need to call the binary search process over $(q_2, q_3)$ to derive $d_2$ and enable the computation of $t_1$'s exact location. One can see that, overall, the query complexity for computing both the Voronoi cell and the location of a tuple remains $O(m \log(1/\epsilon))$, where $m$ is the number of edges for its Voronoi cell.

\section{Discussions}\label{sec:ext}
\subsection{Aggregates with Selection Conditions}\label{subsec:selCondn}
In most of the previous discussions, we considered aggregates without selection conditions (i.e., every tuple in the bounding region is aggregated). There is indeed a straightforward extension to aggregates with selection conditions - specifically, there are two possible scenarios:

The first is when the selection condition can be ``passed through'' to LBS. For example, if our goal is to COUNT \textsc{``Starbucks''} within the bounding region, the selection condition \textsc{Name = `Starbucks'} can be passed through to LBS - i.e., we simply append to each query we issue the exact same selection condition as the aggregate, \textsc{Name = `Starbucks'}. One can see that no other change is required to the aggregate estimation process.

The other scenario is when the LBS does {\em not} support the selection condition. For example, if we want to COUNT all businesses with at least an average review score of four stars within the bounding region, then we cannot simply pass this selection condition to an LBS that does not support filtering by average review scores. In this case, we simply need to ``post-process'' the selection condition - e.g., for the above example, this means that after randomly choosing a query and obtain the returned tuple (as in \S\ref{sec:lrk}), we first determine if the tuple satisfies the filtering condition. If so, we continue with the original process and return the same estimation. Otherwise, we return 0 (i.e., the aggregate query applied over the returned tuple, again divided by the sampling probability) as the estimation. One can see that the result remains an unbiased estimation for the aggregate, now with selection conditions.

In the experiments, we shall demonstrate online tests over real-world LBS on aggregates with selection conditions in both categories - e.g., COUNT of \textsc{Starbucks} over Google Maps, which can be passed through, and COUNT(restaurants) that are open on Sundays, which cannot.

\subsection{Leveraging External Knowledge}

In previous discussions, we focused on how to process the results returned by a randomly chosen query (e.g., how to compute the top-$k$ Voronoi cell of a returned tuple). The way the initial query is chosen, however, remains a simple design of choosing a location uniformly at random from the bounding region. Admittedly, without any knowledge of the distribution of tuple locations, uniform distribution appears the natural choice. Nonetheless, in real-world applications, we often have certain {\em a priori} knowledge of the tuple distributions, which we can leverage to optimize the sampling distribution of queries.

For example, if our goal is to estimate an aggregate, say COUNT, of Point-Of-Interests (POIs, e.g., restaurants) in the US, a reasonable assumption is that the density of POIs in a region tends to be positively correlated with the region's population density. Thus, we have two choices: either to sample a location uniformly at random - which leads to POIs in rural areas to be returned with a much higher probability (because their Voronoi cells tend to be larger); or to sample a location with probability proportional to its population density - which hopefully leads to a more-or-less uniform selection probabilities over all POIs. Clearly, the second strategy is likely better for COUNT estimation, as a more uniform selection probability distribution directly leads to a smaller estimation variance (and therefore error). For example, in the extreme-case scenario where all POIs are selected with equal probability, our COUNT estimation will be precise with zero error. Thus, an optimization technique we adopt in this case is to design the initial sampling distribution of queries according to the population density information retrieved from external sources, e.g., US Census data \cite{uscensusdata}.

There are two important notes regarding this optimization: First, no matter if the external knowledge is accurate or not, the COUNT and SUM estimations we produce always remain unbiased. This is obvious from (\ref{equ:uce}) in \S\ref{sec:lrk}, which guarantees unbiasedness no matter what the sampling distribution $p(t)$ is. Second, the optimal sampling distribution depends on both the tuple distribution and the aggregate query itself. For example, if we want to estimate the SUM of review counts for all POIs, then the optimal sampling distribution is to sample each tuple with probability proportional to its review count (as this design produces zero estimation variance and error). Given the difficulty of predicting the aggregate (e.g., review COUNT in this case) ahead of time, leveraging external knowledge is better considered as heuristics (a very effective one nonetheless, as we shall demonstrate in experimental results) rather than a practice that guarantees the reduction of estimation errors.

\subsection{Special LBS Constraints}\label{subsec:lbsConstraints}

We now consider two special constraints that are enforced by the query interfaces of some real-world LBS. The first one is a {\em maximum radius} on the returned results - i.e., the distance between the query location $q$ and the returned tuples is bounded by a pre-determined threshold $d_{\max}$. If no tuple in the database falls within the circle centered at $q$ with radius $d_{\max}$, then the query returns empty. Google Maps and Weibo both enforce this constraint, with the threshold being 50 KM \cite{googleplacesapi} and 11 KM\footnote{\url{http://open.weibo.com/wiki/2/place/nearby/users}}, respectively.

Interestingly, no change is required for our algorithms (both LR-LBS-AGG and LNR-LBS-AGG) to handle this situation. One can see that, as long as a query result is non-empty, the nearest neighbor is always returned, enabling the usage of our algorithms. When a query returns empty, we simply return 0 as the COUNT or SUM estimation (for this sample query). The unbiasedness is unaffected - note from (\ref{equ:uce}) in \S\ref{sec:lrk} that unbiasedness is guaranteed no matter if the sampling probability $p(t)$ of all tuples sum up to 1 or not - as long as each tuple still has a positive probability to be returned. With this constraint, there is $\sum_t p(t) < 1$ with the remaining probability returning 0 - still leading to an unbiased SUM or COUNT estimation.

The second constraint we have observed from real-world LBS is a more complex ranking function that involves not only the distance between query location and a tuple but also other factors such as the static rank of certain attributes for the tuple. Google Places API is an example here, as it allows ranking by ``prominence'' which takes into account both distance and tuple popularity\footnote{Note that Google Places API also supports traditional distance-based ranking, enabling the direct usage of our algorithms.}.

For this constraint, the applicability of our results is no longer straightforward. The key challenge here is that the area returning a tuple may become segregated across many disjoint regions, making it extremely difficult to compute the sampling probability ($p(t)$ in (\ref{equ:uce}) in \S\ref{sec:lrk}) for a tuple. To understand why, consider an example where tuples are ranked according to the SUM of two scores, one is distance, awarding a higher score to a tuple closer by, but 0 to tuples more than 50 miles away. The other is a static score such as popularity. What might happen here is that the most popular tuple (in the bounding region, say US) is returned by queries on all places without a tuple within 50 miles (say the middle of a desert in Nevada). Clearly, it becomes extremely difficult to enumerate all the disjoint regions that return this tuple.

Fortunately, for LR-LBS in practice, it is still highly likely for our LR-LBS-AGG algorithm to successfully handle the constraint - because the algorithm works properly as long as the nearest neighbor is always included in the top-$k$ results. Since an LR-LBS returns tuple locations, we can always post-process the query answer to obtain the nearest neighbor according to distance, and then apply our algorithm. Given that $k \gg 1$ in real-world LBS, we anticipate a near-certain probability for the nearest neighbor to be included in the top-$k$ results, thus enabling LR-LBS-AGG.

\subsection{Extension to Higher Dimensions}

While LBS in practice is mostly confined to 2D, we would like to point out here (if only for theoretical interests) that our algorithm readily applies to $k$NN queries over higher-dimensional data where Euclidean distance is used as the ranking function. Specifically, note that for LR-LBS, Theorem~\ref{thm:lr} holds no matter what dimensionality the tuple locations have - as a higher-dimensional Voronoi cell computed from a subset of tuples still completely encompasses the real one. Similarly, all the optimizations discussed in \S\ref{sec:rer} readily apply as well. For LNR-LBS-AGG, the only change required is on the binary search process: instead of finding the perpendicular bisecting {\em line} between two tuples as in the 2D case, we now need to find the perpendicular bisecting $(d-1)$-dimensional plane in the $d$-dimensional case. Correspondingly, each vertex of the $d$-D Voronoi cell is now the interception of $\left(d \atop 2\right)$ such $(d-1)$-dimensional planes. In other words, we still only need two vertices of the Voronoi cell to derive a tuple's location in LNR-LBS - enabling the usage of LNR-LBS-AGG.

\section{Experimental Results}\label{sec:exp}

\subsection{Experimental Setup}\label{sec:expSetup}
\noindent{\bf Hardware and Platform:}
All our experiments were performed on a quad-core 2.5 GHz Intel i7 machine running Ubuntu 14.10 with 16 GB of RAM.
The algorithms were implemented in Python.

\noindent{\bf Offline Real-World Dataset:}
To verify the correctness of our results, we started by testing our algorithms locally over OpenStreetMap \cite{openstreetmap}, a real-world spatial database consisting of POIs (including restaurants, schools, colleges, banks, etc.) from public-domain data sources and user-created data.

We focused on the USA portion of OpenStreetMap. To enrich the SUM/COUNT/AVG aggregates for testing, we grew the attributes of POIs (specifically, restaurants and schools) by ``JOINing'' OpenStreetMap with two external data sources, Google Maps\cite{googleplacesapi} and US Census \cite{uscensusdata}. Specifically, we added for each (applicable) restaurant POI its review ratings from Google Maps; and each school POI its enrollment number from US Census. The US Census data is also used as the (optional) external knowledge source - i.e., to provide the population density data for the optimization technique discussed in \S\ref{sec:ext}.

Note that we have complete access to the enriched dataset and full control over its query interface. Thus, we implemented a $k$NN interface with ranking function being the Euclidean distance; returned attributes either containing all attributes including location (for testing LR-LBS) or without location (for LNR-LBS); and varying $k$ to observe the change of performance for our algorithms.

\vspace{1mm}
\noindent{\bf Online LBS Demonstrations:}
In order to showcase the efficacy of our algorithms in real-world applications, we also conducted experiments {\em online} over 3 very popular real-world LBS: Google Maps \cite{googleplacesapi}, WeChat\cite{wechat}, and Sina Weibo\cite{weibo}. Each of these services has at least hundreds of millions of users. Unlike the offline experiments, we do not have direct access to the ground-truth aggregates due to the lack of partnership with these LBS. Nonetheless, we did attempt to verify the accuracy of our aggregate estimations with information provided by external sources (e.g., news reports) - more details later in the section.

In online experiments for LR-LBS, we used Google Maps, specifically its Google Places API \cite{googleplacesapi}, which takes as input a query location (latitude/longitude pair) and (optionally) filtering conditions such as keywords (e.g., ``Starbucks'') or POI type (e.g., ``restaurant''), and returns at most $k = 60$  POIs nearby, ordered by distance from low to high, with location and other relevant information (e.g., review ratings) returned for each POI.

For LNR-LBS, we tested WeChat and Sina Weibo using their respective Android apps. Both directly fetch locations from the OS positioning service and search for nearby users, with WeChat returning at most $k = 50$ and Sina Weibo returning $k = 100$ nearest users. Unlike Google Maps, these two services do {\em not} return the exact locations of these nearby users - but only provide attributes such as name, gender, etc.

An implementation-related issue regarding WeChat is that, unlike its mobile apps, its API does not support nearest-neighbor search. Thus, we conducted our experiments by running its Android app (with support for nearest-neighbor search) on the official Android emulator, and used the debugging feature of location spoofing to issue queries from different locations. We then used the MonkeyRunner tool\footnote{\url{http://developer.android.com/tools/help/monkeyrunner_concepts.html}} for Android emulator to interact with the app - i.e., sending queries and receiving results. Specifically, to extract query answers from the Android emulator, we first took a screenshot of the query-answer screen, and then parsed the results through an OCR engine, tesseract-ocr\footnote{\url{https://code.google.com/p/tesseract-ocr/}}.

\vspace{1mm}
\noindent{\bf Algorithms Evaluated:}
We mainly evaluated three algorithms in our experiments: LR-LBS-AGG and LNR-LBS-AGG from \S\ref{sec:lrlbsagg} and \S\ref{sec:lnrlbsagg}, respectively, along with the only existing solution for LR-LBS (note there is no existing solution for LNR-LBS), which we refer to as LR-LBS-NNO \cite{DKA+11}.  LR-LBS-NNO has a number of tuneable parameters - we picked the parameter settings and optimizations from \cite{DKA+11} that provided the best performance. We also tested variants of our algorithms that lack certain variance-reduction techniques discussed in \S\ref{sec:lrlbsagg} and the weighted sampling in order to demonstrate the effectiveness of these techniques.


\vspace{1mm}
\noindent{\bf Performance Measures:} As discussed in \S\ref{sec:pre}, we measure efficiency through query cost, i.e., the number of queries issued to the LBS. Our estimation accuracy is measured experimentally by relative error. Each data point is obtained as the average of 25 runs.

\subsection{Experiments over Real-World Datasets}\label{subsec:expOffline}
\begin{figure}[t]
    \centering
    \includegraphics[scale=0.18]{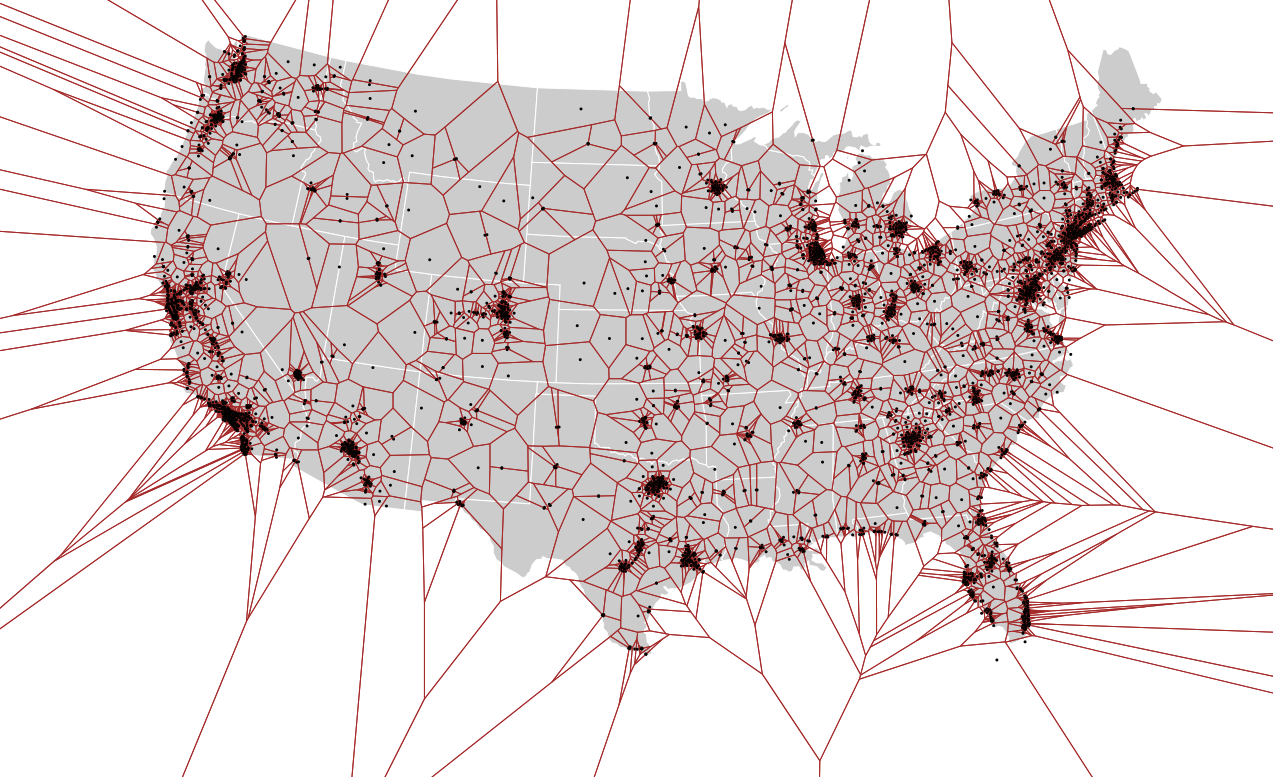}
    \caption{Voronoi Decomposition of Starbucks in US}
    \label{fig:starbucksVoronoiDiag}
\end{figure}

\noindent {\bf Unbiasedness of Estimators:}
Our first experiment seeks to show the unbiasedness of our estimators for LR-LBS-AGG and LNR-LBS-AGG even after incorporating the various error reduction strategies.
LR-LBS-NNO is known to be unbiased from \cite{DKA+11} after an expensive bias correction step.
Figure~\ref{fig:algo_conv_rate} shows a trace of the three algorithms when estimating COUNT of all restaurants in US
by plotting the current estimate periodically after fixed number of queries have been issued to LBS.
We can see that LR-LBS-NNO has a high variance and takes significantly longer to converge
while our estimators quickly converge to the ground truth much before LR-LBS-NNO.
This indicates that the error reduction techniques successfully reduce the variance of our estimators.

\noindent {\bf Query Cost versus Relative Error:} We start by testing the key tradeoff - i.e., query cost vs.~relative error - for all three algorithm over various aggregates. Specifically, Figures~\ref{fig:ws_relErrorVsQC_Count}, \ref{fig:ws_relErrorVsQC_Count_restaurants}, \ref{fig:ws_relErrorVsQC_sum} and \ref{fig:ws_relErrorVsQC_avg}
show the results for four queries, COUNT of schools in US, COUNT of restaurants in US, SUM of school enrollments in US, and AVG of restaurant ratings in Austin, Texas, respectively. One can see that not only our LR-LBS-AGG algorithm significantly outperform the previous LR-LBS-NNO \cite{DKA+11} in all cases, even our algorithm for the LNR-LBS case achieves much better performance than the previous algorithm (despite the lack of tuple locations in query results).

\noindent {\bf Query Cost versus LBS Size:}
Figure~\ref{fig:ws_NVsQC_Count} shows the impact of LBS database size (in terms of number of POIs or users)
on query cost to estimate the COUNT of schools in US for a fixed relative error of 0.1 .
We varied the database size by picking a subset of the database (such as 25\%, 50\%, etc) uniformly at random and estimating the aggregate over it.
As expected for a sampling-based approach, the increase in database size do not have any major impact and only results in a slight increase in overall query cost (due to the more complex topology of Voronoi cells).

\noindent {\bf Query Cost versus $k$:}
Figure~\ref{fig:ws_KVsQC_Count} shows how the value of $k$ (the number of tuples returned by $k$-NN interface) affects the query cost.
Again, we measure the query cost required to achieve a relative error of 0.1 on the aggregate COUNT of schools in US.
We compared an variant that leverages our variance reduction strategy that adaptively decides which subset of tuples (i.e. $h$ of top-$k$) to use
with fixed variants that uses all the top-$k$ tuples.
As expected, our adaptive strategy has a lower query cost and consistently achieves a saving of 10\% of query cost.

\noindent {\bf Efficacy of Error Reduction Strategies:}
We started by verifying the effectiveness of weighted sampling using external knowledge - Figure~\ref{fig:weighted_uniform_sampling_count} compares the performance of the two sampling strategies -
uniform and weighted - while estimating the COUNT of schools in US. One can see that the weighted sampling variants result in significant savings in query cost.

In our final set of experiments, we evaluated the efficacy of the various error reduction strategies we described in the paper.
We compared 5 different variants of our algorithm for LR-LBS
ranging from no error reduction strategies (LR-LBS-AGG-0) to sequentially adding them one by one in the order discussed in the paper culminating in LR-LBS-AGG that incorporates all of them. Figure~\ref{fig:ws_seq_opt} shows the results of this experiment.
As expected the first two strategies of faster initialization and leveraging history caused a significant reduction in query cost.
We observed that the results for LNR-LBS were very similar.

\begin{figure*}[ht]
\begin{minipage}[t]{0.23\linewidth}
\centering
    \includegraphics[width = 45mm, height = 32mm]{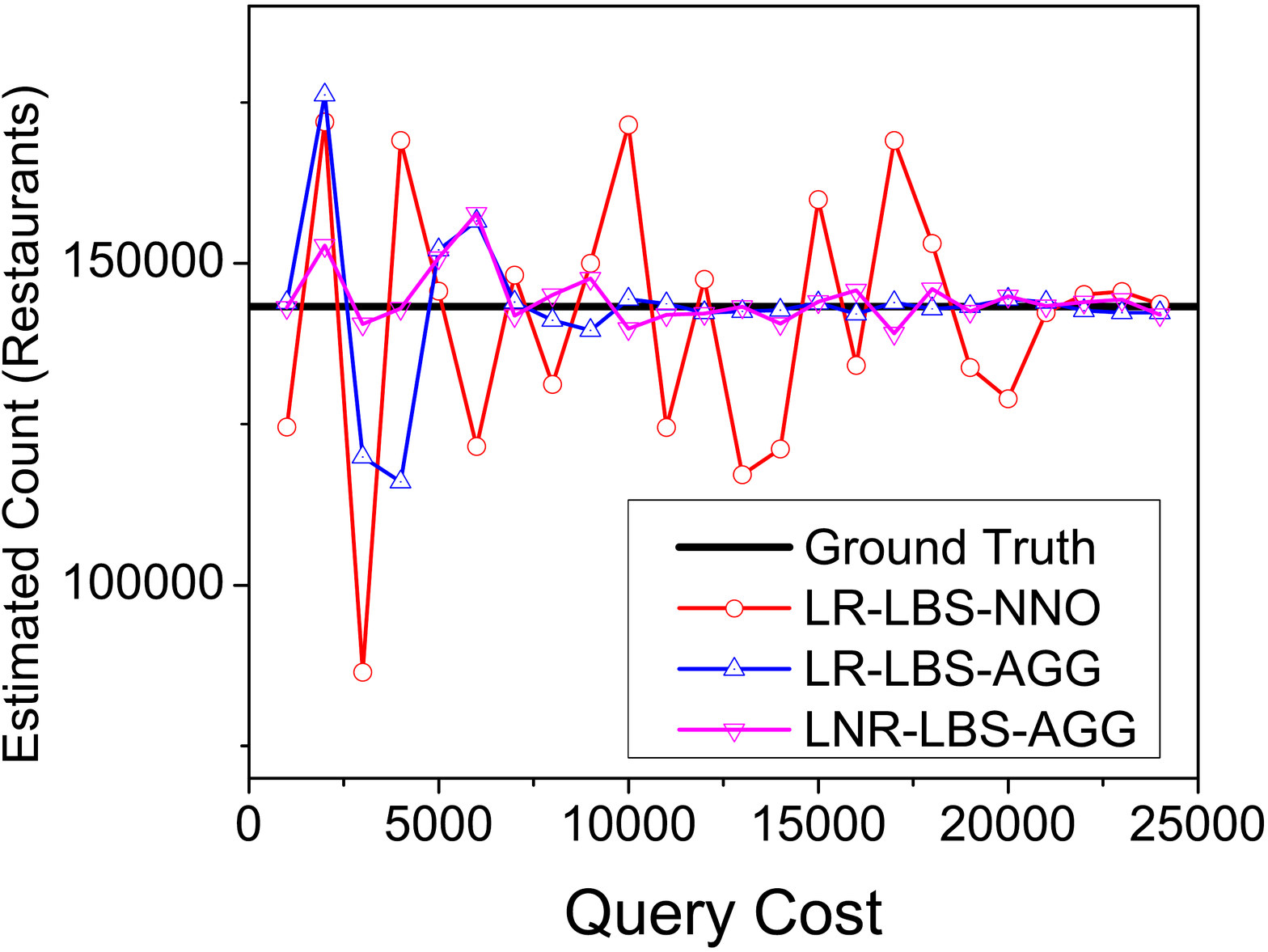}
    \vspace{-7mm}\caption{Unbiasedness of Estimators}
    \label{fig:algo_conv_rate}
\end{minipage}
\hspace{1mm}
\begin{minipage}[t]{0.23\linewidth}
\centering
    \includegraphics[width = 45mm, height = 32mm]{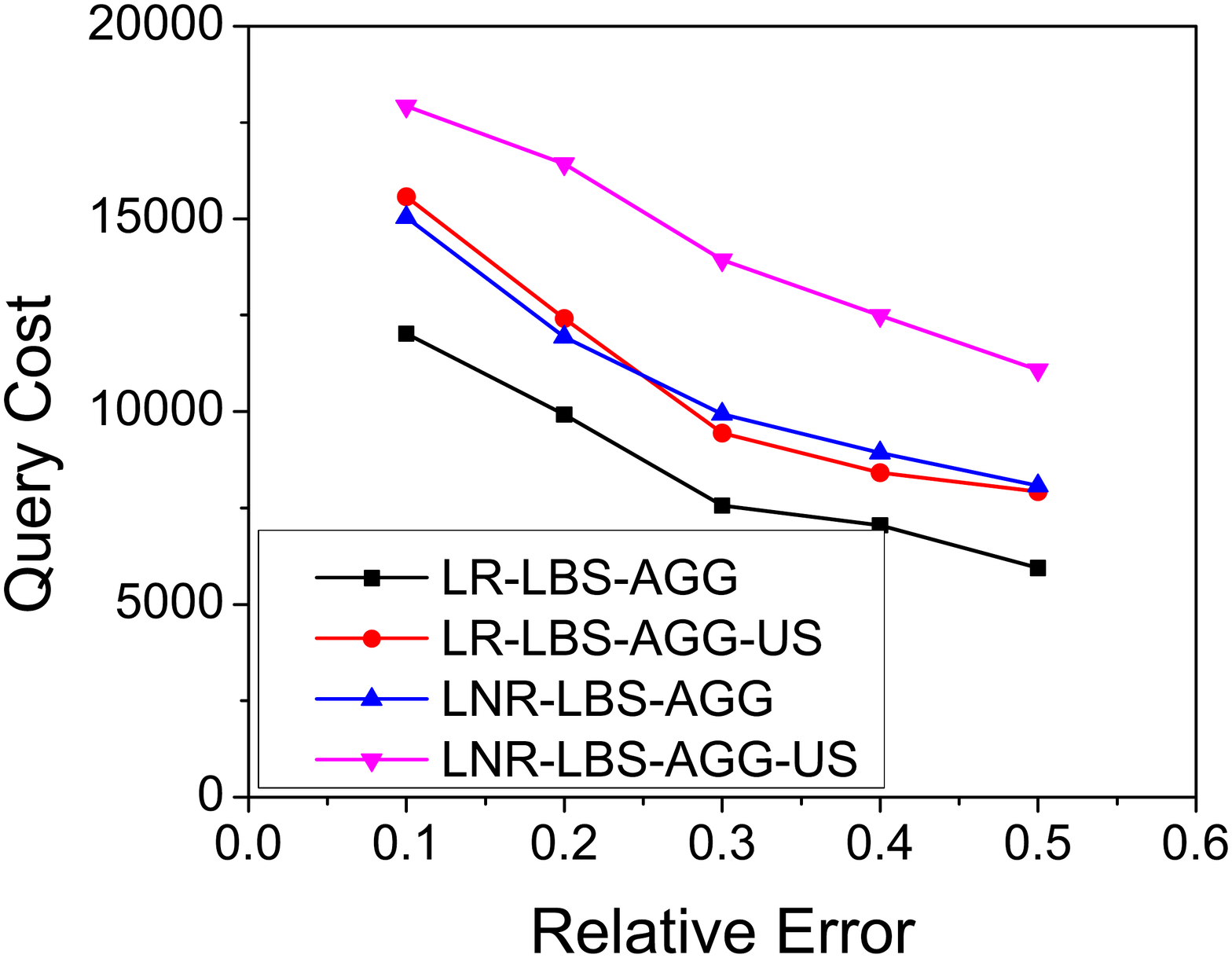}
    \vspace{-7mm}\caption{Impact of Sampling Strategy}
    \label{fig:weighted_uniform_sampling_count}
\end{minipage}
\hspace{1mm}
\begin{minipage}[t]{0.23\linewidth}
\centering
    \includegraphics[width = 45mm, height = 32mm]{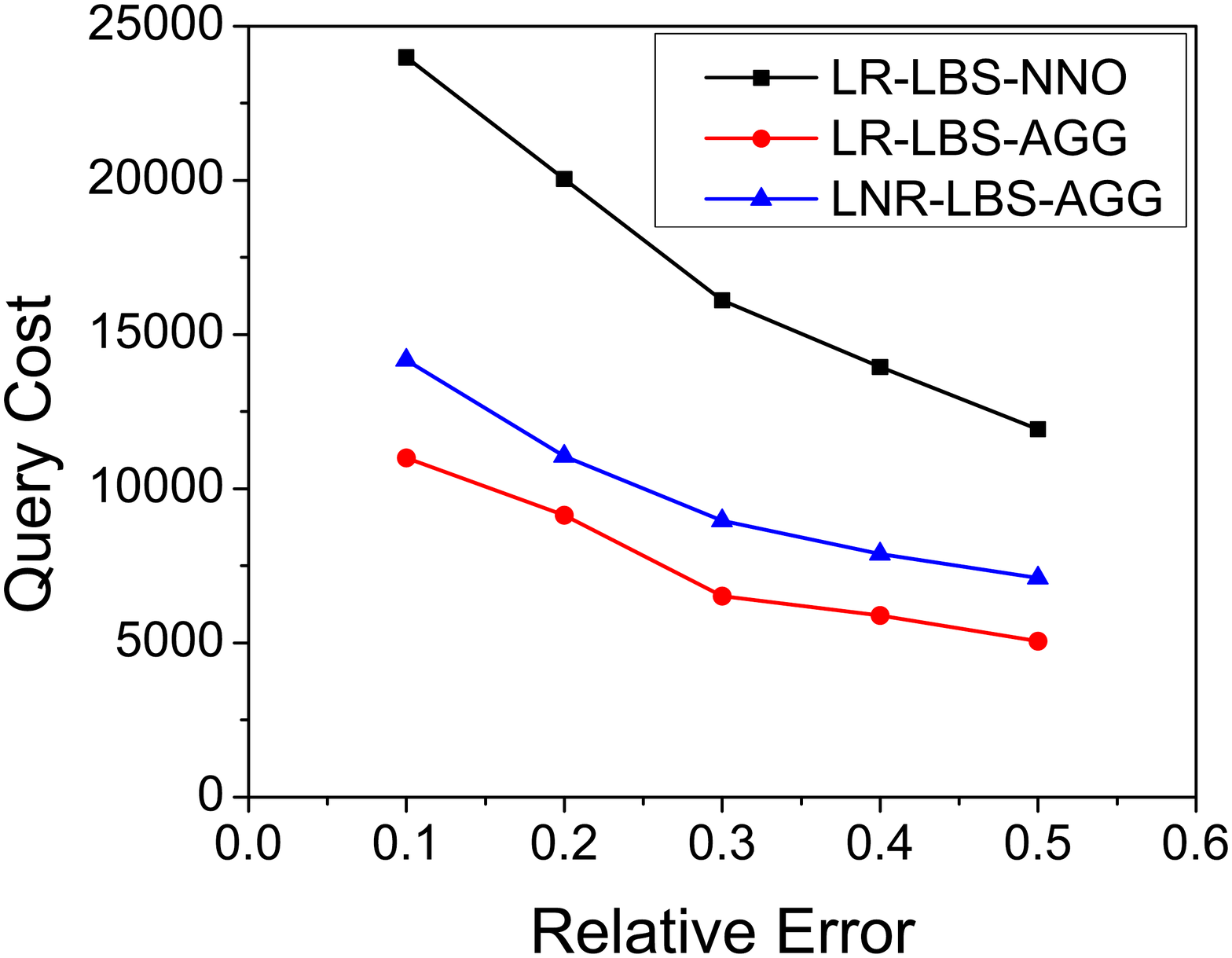}
    \vspace{-7mm}\caption{COUNT(schools)}
    \label{fig:ws_relErrorVsQC_Count}
\end{minipage}
\hspace{1mm}
\begin{minipage}[t]{0.23\linewidth}
\centering
    \includegraphics[width = 45mm, height = 32mm]{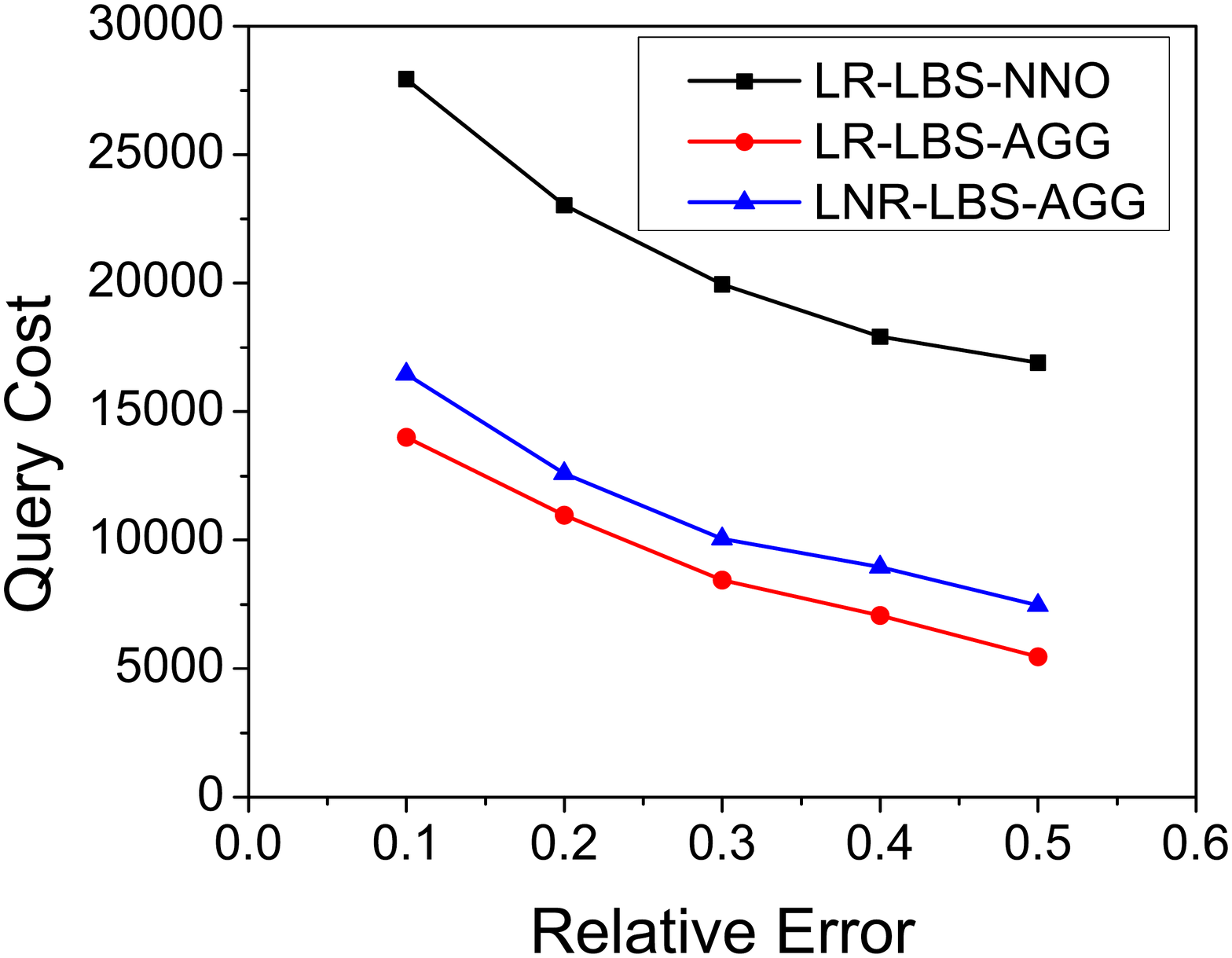}
    \vspace{-7mm}\caption{COUNT(restaurants)}
    \label{fig:ws_relErrorVsQC_Count_restaurants}
\end{minipage}
\hspace{-2mm}
\end{figure*}
\vspace{-4mm}
\begin{figure*}[ht]
\begin{minipage}[t]{0.23\linewidth}
\centering
    \includegraphics[width = 45mm, height = 32mm]{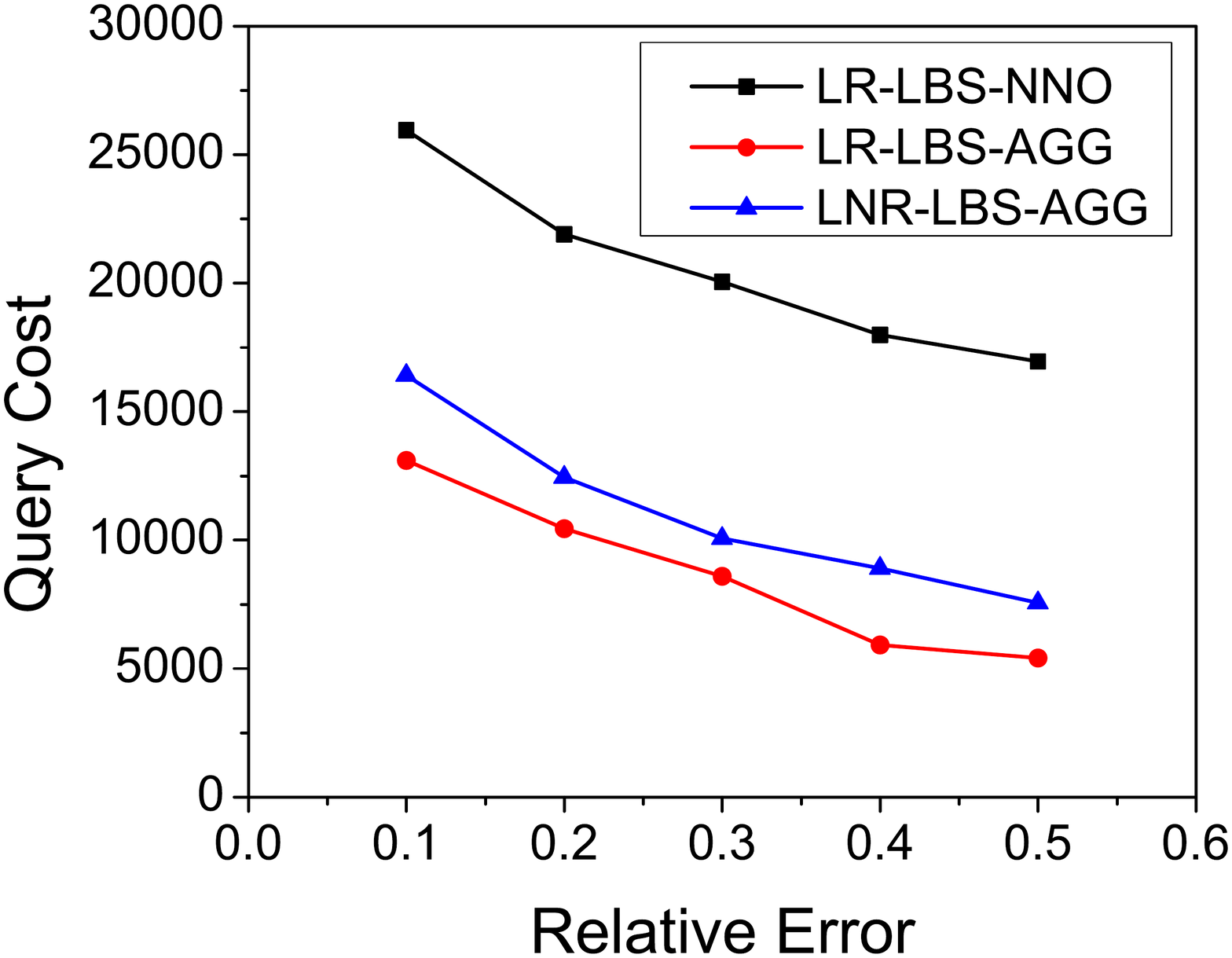}
    \vspace{-7mm}\caption{SUM(enrollment) in Schools}
    \label{fig:ws_relErrorVsQC_sum}
\end{minipage}
\hspace{1mm}
\begin{minipage}[t]{0.23\linewidth}
\centering
    \includegraphics[width = 45mm, height = 32mm]{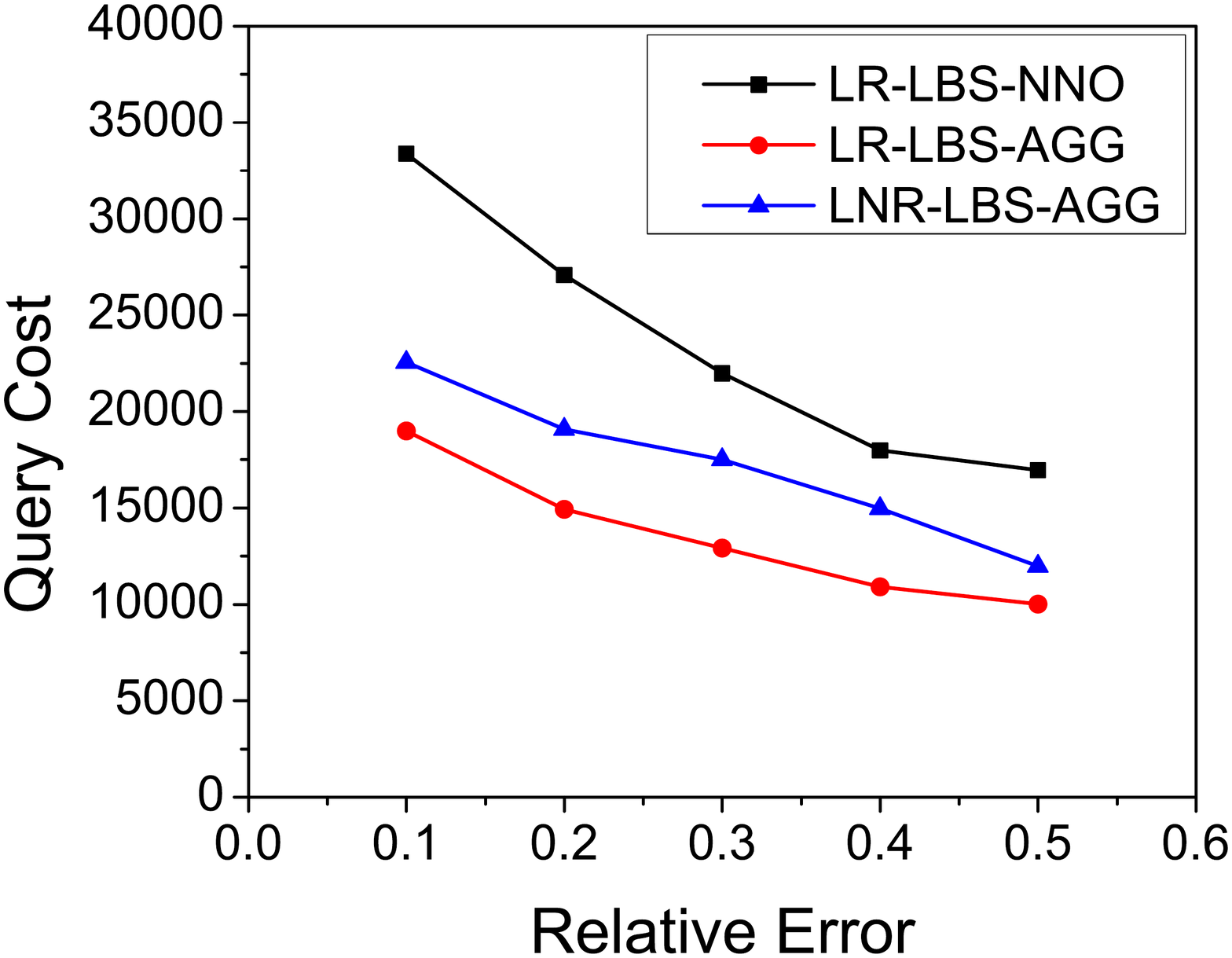}
    \vspace{-7mm}\caption{AVG(ratings) in Austin, TX Restaurants}
    \label{fig:ws_relErrorVsQC_avg}
\end{minipage}
\hspace{1mm}
\begin{minipage}[t]{0.23\linewidth}
\centering
    \includegraphics[width = 45mm, height = 32mm]{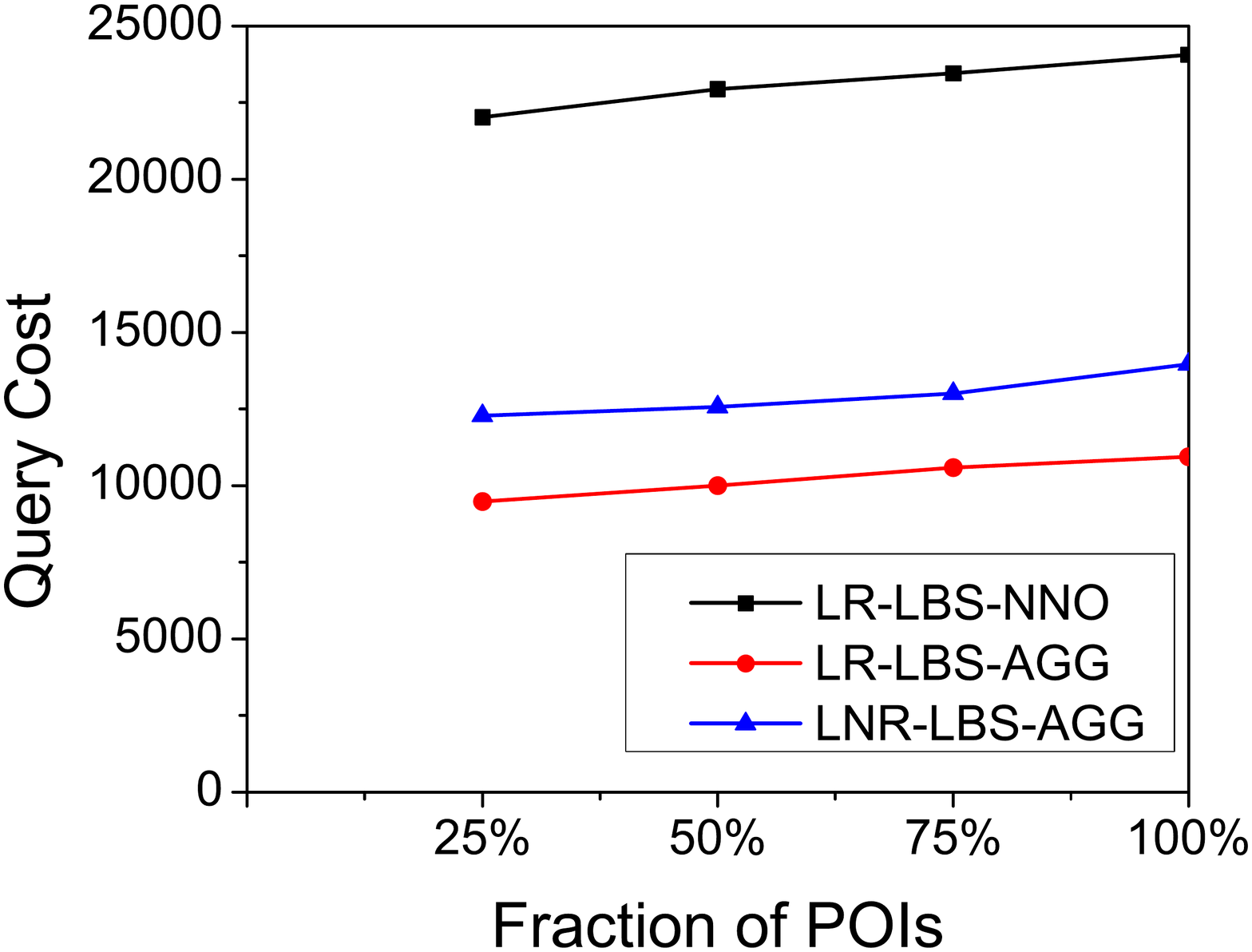}
    \vspace{-7mm}\caption{Varying Database Size}
    \label{fig:ws_NVsQC_Count}
\end{minipage}
\hspace{1mm}
\begin{minipage}[t]{0.23\linewidth}
\centering
    \includegraphics[width = 45mm, height = 32mm]{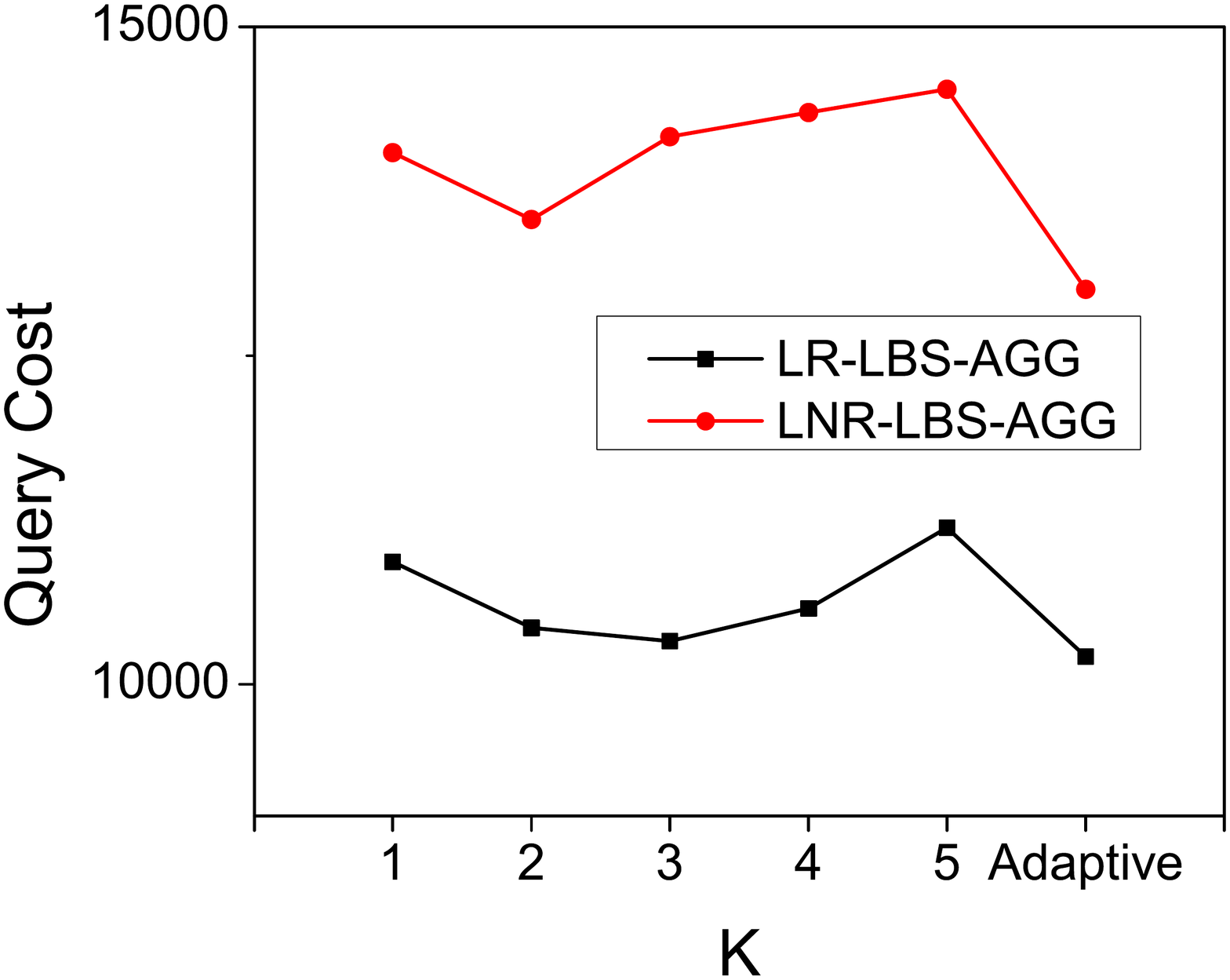}
    \vspace{-7mm}\caption{Varying $k$}
    \label{fig:ws_KVsQC_Count}
\end{minipage}
\hspace{-2mm}
\end{figure*}

\vspace{2mm}
\subsection{Online Demonstrations}\label{subsec:expOnline}

\noindent {\bf Google Places:}
Our first online demonstration of LR-LBS-AGG was on Google Places API and estimating two aggregates with different selection conditions. The first involves selection conditions that can be passed over to LBS (COUNT of Starbucks in US) while the second involves aggregates with selection condition that cannot be passed over (see \S\ref{sec:ext} for discussion) such as COUNT of restaurants in Austin, Texas that are open on Sundays.

Table~\ref{tbl:onlineExp} shows the results of the experiments. We also verified the accuracy of our estimates for first aggregate (COUNT of Starbucks) through the public release of Starbucks Corp\cite{starbucksRef}. One can see from the table that, with just 5000 queries, LR-LBS-AGG achieves very accurate estimations ($< 5\%$ relative error) for the count.

To provide an intuitive illustration of the execution of our algorithm, we also continued the estimation of COUNT(``Starbucks'') until enumerating all Starbucks in the US. Figure~\ref{fig:starbucksVoronoiDiag} demonstrates the Voronoi diagram constructed by our algorithm at the end. One can see the vastly different sizes of Voronoi cells - spanning hundreds of thousands km$^2$ in rural areas and smaller than 1km$^2$ in urban cities, justifying the effectiveness of weighted sampling.

\vspace{1mm}
\noindent {\bf WeChat and Sina Weibo:}
We estimated two aggregates, (1) total number of users and (2) gender ratio, over two LNR-LBS, WeChat and Sina Weibo, respectively. Table~\ref{tbl:onlineExp} shows the results of the experiments. One can observe from the table that our estimations quickly converge to a narrow range (+/- 5\%) after issuing a small number of queries (10000). While we do not have access to the ground truth this time, we do note an interesting observation from our results: the percentage of male users is much higher on WeChat than on Sina Weibo - an observation verified by various surveys in China \cite{wechatweiboref}.
We would like to note that the COUNT aggregate measures the number of users who have enabled the location feature of WeChat and Weibo respectively
and is different from the number of registered or active accounts. 


\begin{table}[t]
\small
\centering
\caption{Summary of Online Experiments}
\label{tbl:onlineExp}
\begin{tabular}{p{1cm} p{3cm} c c} 
  \hline
    {\bf LBS}             &   {\bf Aggregate}               &  {\bf Estimate}  &  {\bf Query Budget} \\ \hline
    Google Places   &  COUNT(Starbucks in US)    & 12023      &  5000        \\ \hline 
    Google Places   &  COUNT(restaurants in Austin TX and open on Sundays)   & 2856 &  5000   \\ \hline
    WeChat          & COUNT(WeChat users in China) & 338.4 M & 10000 \\ \hline
    WeChat & Gender Ratio of WeChat users in China & 67.1:32.9 & 10000 \\ \hline
    Weibo & COUNT(Weibo users in China) & 44.6 M & 10000 \\ \hline
    Weibo & Gender Ratio of Weibo users in China & 50.4:49.6 & 10000 \\ \hline
\end{tabular}
\end{table}

\vspace{1mm}
\noindent {\bf Localization Accuracy:}
As a final set of experiments, we also evaluated the effectiveness of our Tuple position computation approaches in tracking real world users.
Specifically, we sought to identify the precise location of {\em static} objects located across the region.
We conducted this experiment over Google Places in US and WeChat in China.
We treated Google Places as LNR-LBS by ignoring the location provided its API.
We sought to identify the location of 200 randomly chosen POIs after issuing at most 100 queries for each POI.
For WeChat, we positioned our user at 200 diverse locations within China (typically in Urban places) and sought to identify the location.
Since the precise location of the POI/user is known, we can compute the distance between actual and estimated positions.
Figure~\ref{fig:tracking_accuracy} shows the result of the experiments.
The results show that more than 80\% of the POIs were located within 20m of the exact location and every POI was located within a distance of 75m.
Due to the various location obfuscation strategies employed by WeChat, we achieved an accuracy of 50m or lower only 45\% of the time.
We still were able to locate user within 100m  almost all the time.
While our theoretical methods could precisely identify the location, the discrepancy in real-world occurs due to
various external factors such as obfuscation, coverage/localization limits etc.

\begin{figure}[h]
\begin{minipage}[h]{0.48\linewidth}
    \centering
    \includegraphics[width = 45mm, height = 32mm]{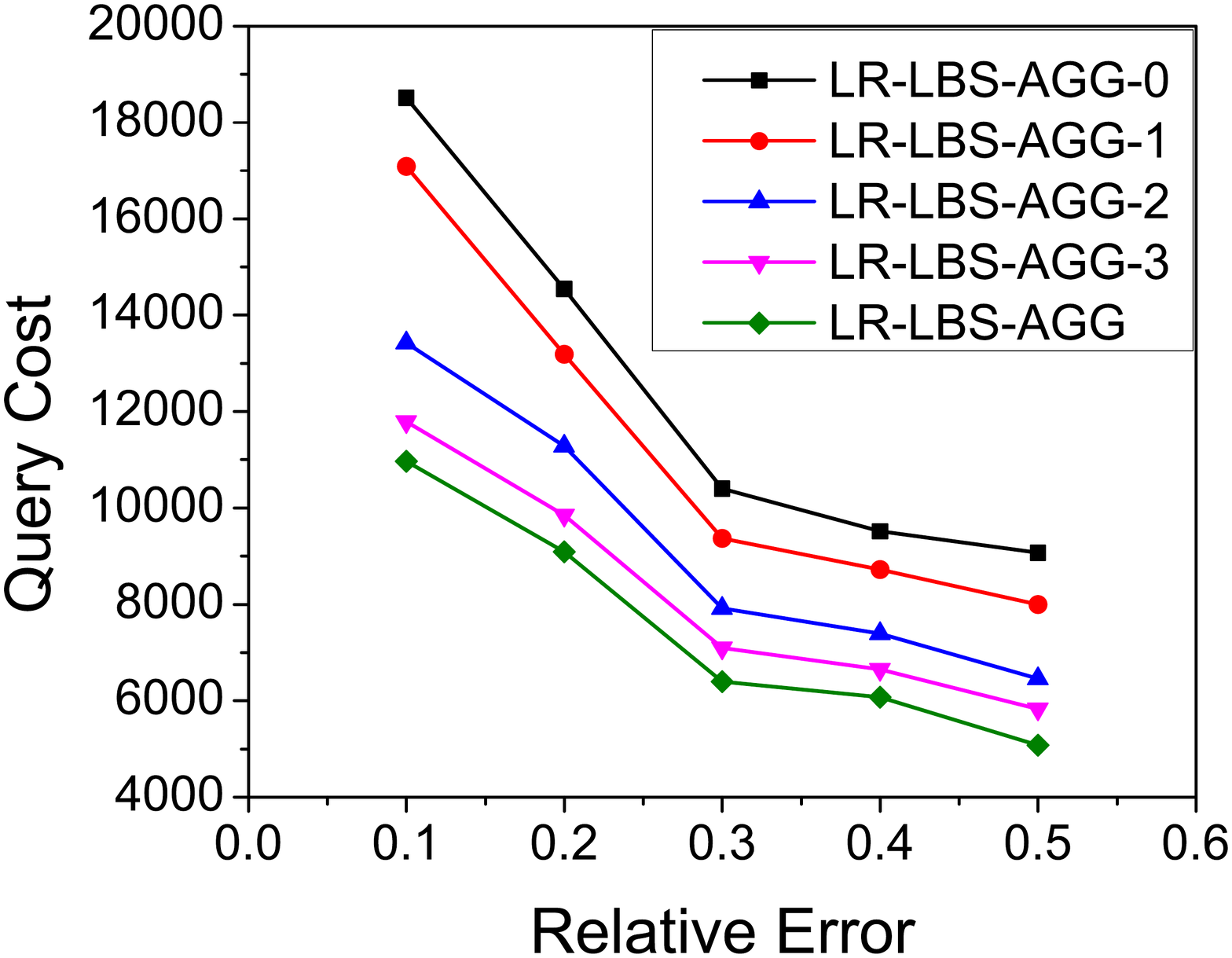}
    \vspace{-7mm}\caption{Query Savings of Error Reduction Strategies}
    \label{fig:ws_seq_opt}
\end{minipage}
\hspace{1mm}
\begin{minipage}[h]{0.48\linewidth}
    \centering
    \includegraphics[width = 45mm, height = 32mm]{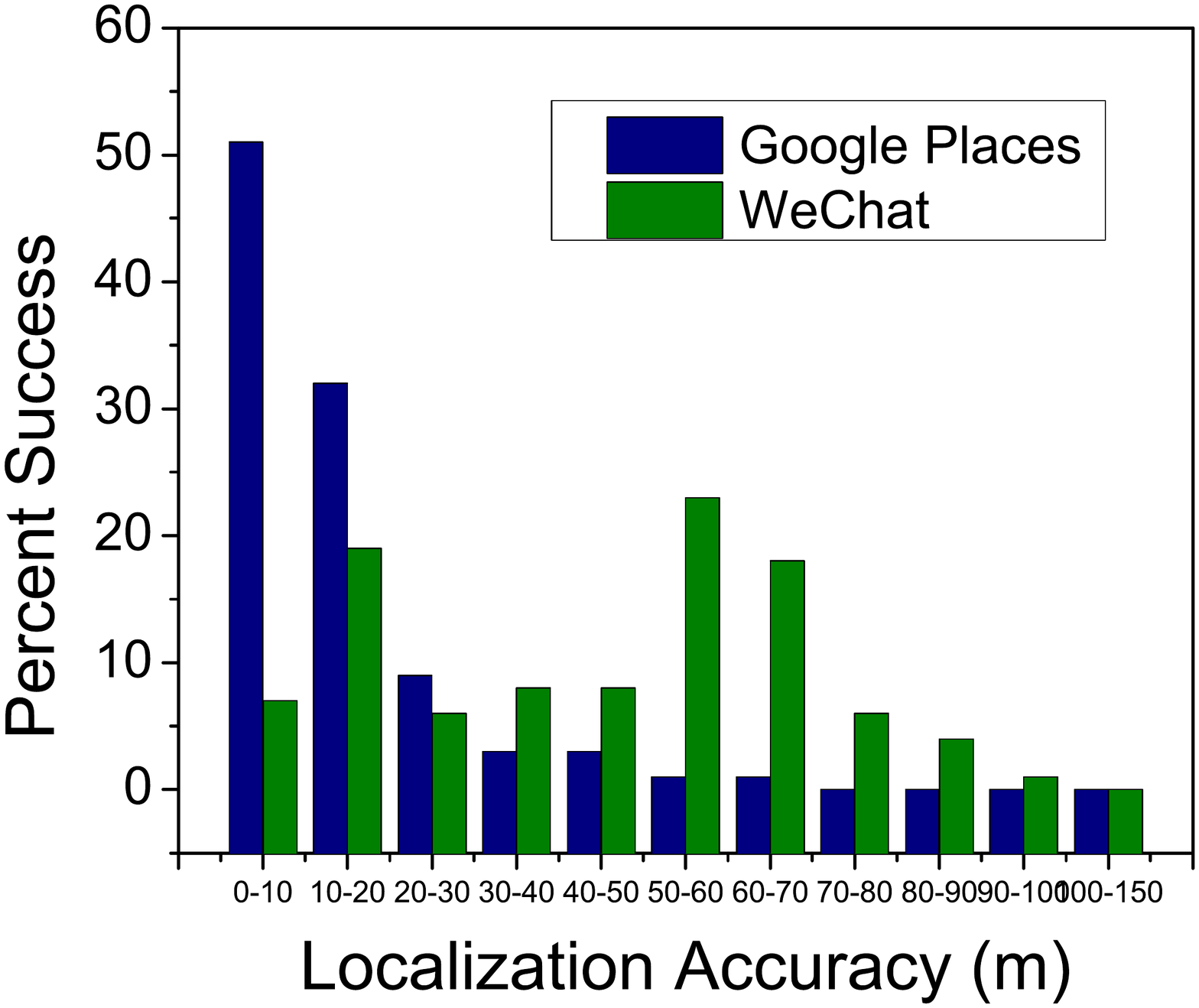}
    \vspace{-7mm}
    \caption{Localization Accuracy}
    \label{fig:tracking_accuracy}
\end{minipage}
\hspace{1mm}
\end{figure}

\vspace{-3mm}
\section{Related Work}\label{sec:relWork}

\noindent{\bf Analytics and Inference over LBS:}
Location based Services (LBS) such as map services (Google Maps) and location based social networks
(such as FourSquare, WeChat, Sina Weibo) are becoming popular in recent years.
The prior work on analytics over LBS focussed exclusively on the LR-LBS scenario.
The closest prior work is \cite{DKA+11} that seeks to estimate COUNT and SUM aggregates over LR-LBS using a nearest neighbor oracle.
It then corrects the bias by using the area of the Voronoi cell using an approach that is very expensive.
Aggregate estimation over LBS such as FourSquare that does not provide nearest neighbor oracle interface could be done using \cite{WHL14,LSW+12}.
\cite{LSW+12} proposed a random region sampling method with an unknown estimation bias that could be eliminated using techniques from \cite{WHL14}.
However, none of them work for LNR-LBS.
There has been work on inferring the location and other private information of users of LBS.
\cite{LZG+13} proposed trilateration based methods to infer the location of users even when the LBS only provided relative distances.
There has been other extensive work\cite{de2013unique,ma2013privacy, srivatsa2012deanonymizing, zang2011anonymization}
on inferring location information and re-identification of users although none of them are applicable for the LBS models studied in this paper.

\noindent{\bf Aggregate Estimations over Hidden Web Repositories:}
There has been a number of prior work in performing aggregate estimation over static hidden databases.
\cite{DJJ+10} provided an unbiased estimator for COUNT and SUM aggregates for {\em static} databases with form based interfaces.
\cite{DDM:07,DZD:09,DZD:10,LTZD:14} describe efficient techniques to obtain random samples from hidden web databases
that can then be utilized to perform aggregate estimation.
Recent works such as \cite{LWA12, DBLP:conf/edbt/WangA11} propose more sophisticated sampling techniques so as to reduce the variance of the aggregate estimation.
For hidden databases with keyword interfaces, prior work have studied estimating the size of
search engines \cite{BG:06,Zhang:2011:MSE:1989323.1989406,zhang2013mining} or a corpus \cite{BFJ+:06, SZS+:06}.

\section{Final Remarks}
In this paper, we explore the problem of aggregate estimation over location based services that are increasingly popular.
We introduced a taxonomy of LBS with $k$-NN query interface based on whether location of the tuple is returned (LR-LBS) or not (LNR-LBS).
For the former, we proposed an efficient algorithm and various error reduction strategies that outperforms prior work.
We initiate study into the latter by proposing effective algorithms for aggregation and inferring the position of tuple
to arbitrary precision which might be of independent interest.
We verified the effectiveness of our algorithms by using a comprehensive set of experiments on a large real-world geographic dataset
and online demonstrations on high-profile real-world websites.

\bibliographystyle{abbrv}
\bibliography{mapco}

\appendix

\section{Binary Search Process}\label{sec:appendixBSearch}

\noindent{\bf Design of Binary Search:} Given the half-line $\ell$ from $c_1$ passing through $c_2$, we conduct the binary search as follows. First, we find $c_\mathrm{b}$, the intersection of this half-line with the bounding box. Then, we perform a binary search between $c_1$ and $c_\mathrm{b}$ to find a segment of the half-line with length at most $\delta$, say with two ends being $c_3, c_4$ (with the distance between $c_3$ and $c_4$ at most $\delta$), such that while $c_3$ returns $t$, $c_4$ returns another tuple, say $t^\prime$. This step takes at most $\log(b/\delta)$ queries, where $b$ is the perimeter of the bounding box.

Then, we consider two half-lines $\ell_1$ and $\ell_2$, both of which start from $c_1$ and form an angle of $-\arcsin(\delta^\prime/r)$ and $+\arcsin(\delta^\prime/r)$ with $\ell$, respectively, where $\delta^\prime$ is a pre-determined (small) threshold and $r$ is the distance between $c_1$ and $c_4$. For each $\ell_i$, we perform the above binary search process to find a (at most) $\delta$-long segment that returns $t$ on one end and $t^\prime$ on the other. Note that such a process might fail - e.g., there might no point on $\ell_i$ which returns $t^\prime$. We set two rules to address this situation: First, we terminate the search for $\ell_i$ if we have reached a segment shorter than $\delta$, with one end returning $t$ and the other returning a tuple other than $t^\prime$. Second, we move on to the next step as long as (at least) one of $\ell_1$ and $\ell_2$ gives us a satisfactory $\delta$-long segment. If neither can produce the segment, we terminate the entire process and output the following (estimated) Voronoi edge: the perpendicular bisector of $(c_3, c_4)$.

Now suppose that $\ell_1$ produces a satisfactory segment of at most $\delta$ long. Let this segment be $(c_5, c_6)$. We simply return our (estimated) Voronoi edge as the line that passes through: (1) the midpoint of $(c_3, c_4)$, and (2) the midpoint of $(c_5, c_6)$. One can see that the overall query cost of the binary search process is at most $3\log(b/\delta)$.

Algorithm~\ref{alg:bsearch} provides the pseudocode for Binary Search process.
\begin{algorithm}[!h]
\caption{{\bf Binary-Search}}
\begin{algorithmic}[1]
\label{alg:bsearch}
\STATE {\bf Input:} Tuple $t$, Locations $c_1, c_2$ where query($c_1$) returns $t$
\STATE {\bf Output:} An edge of $V(t)$ 
\STATE $c_b$ = Intersection of half-line $c_1$, $c_2$ with bounding box
\STATE Find $c_3, c_4$ s.t. $dist(c_3,c_4)<\delta$ and $query(c_3) \neq query(c_4)$
\STATE $r=dist(c_1, c_4)$ 
\STATE Construct lines $\ell_1, \ell_2$ from $c_1$ with angles $\pm \arcsin(\delta'/r)$
\STATE $(c_5, c_6)$ = line segment on $\ell_1$ or $\ell_2$ with $dist(c_5, c_6) < \delta$ and query$(c_5) \neq$ query$(c_6)$
\STATE {\bf if} none exists, return perpendicular bisector of $(c_3, c_4)$
\STATE {\bf else} return line segment passing through midpoints of $(c_3, c_4)$ and $(c_5, c_6)$
\end{algorithmic}
\end{algorithm}

\vspace{1mm}
\noindent{\bf Error Bound on Edge Estimation:} We have the following theorem on the error bound of this binary search process:
\begin{theorem} \label{thm:eee}
For a given tuple $t$ and query location $c_1$ which returns $t$, for any other location $c_2$, the Voronoi cell of $t$ must have an edge $\ell_\mathrm{V}$ that intercepts half-line $(c_1, c_2)$ such that the maximum edge error for estimating $\ell_\mathrm{V}$ satisfies
\begin{align}
\epsilon \leq \max (2\delta^\prime, b \cdot \sin(\arctan(\delta/\delta^\prime))).
\end{align}
In other words, for every point $p \in \ell_\mathrm{V}$, there exists a point $p^\prime$ on our estimated Voronoi edge $\ell^\prime_\mathrm{V}$ generated from $(c_1, c_2)$ (i.e., $p^\prime \in \ell^\prime_\mathrm{V}$), such that
\begin{align}
d(p, p^\prime) \leq \max (2\delta^\prime, b \cdot \sin(\arctan(\delta/\delta^\prime))),
\end{align}
where $d(\cdot,\cdot)$ is the Euclidean distance between two points. In addition, for every vertex $v$ of $\ell_\mathrm{V}$, if line segment $(t, v)$ intercepts $\ell^\prime_\mathrm{V}$, then the interception point $v^\prime$ must satisfy
\begin{align}
d(t, v) - d(t, v^\prime) \leq \max (2\delta^\prime, b \cdot \sin(\arctan(\delta/\delta^\prime))). \label{equ:eee}
\end{align}
\end{theorem}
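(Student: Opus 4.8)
The plan is to reduce the theorem to a purely geometric statement about how well the line through two noisy sample points approximates the true Voronoi edge, and then to bound the two sources of error — angular tilt and, in the degenerate case, a short edge — separately, combining them through the $\max$. First I would fix the true edge $\ell_\mathrm{V}$, which is the perpendicular bisector of $(t,t^\prime)$, where $t^\prime$ is the neighbouring tuple uncovered by the binary search along $\ell$. Because $c_3$ returns $t$ while $c_4$ returns $t^\prime$, the segment $(c_3,c_4)$ (of length at most $\delta$) is crossed by $\ell_\mathrm{V}$ at some point $P_1$; likewise $(c_5,c_6)$ is crossed at $P_2$. Hence the two midpoints satisfy $d(m_1,\ell_\mathrm{V})\le \delta/2$ and $d(m_2,\ell_\mathrm{V})\le \delta/2$, and the estimated edge $\ell^\prime_\mathrm{V}$ is exactly the line through $m_1$ and $m_2$. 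The task is therefore to bound, over the extent of the real edge, the distance from $\ell_\mathrm{V}$ to this line.

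For the successful case (one of $\ell_1,\ell_2$ produces a segment), I would bound the tilt angle $\alpha$ between $\ell_\mathrm{V}$ and $\ell^\prime_\mathrm{V}$. Decomposing $m_2-m_1$ into components parallel and perpendicular to $\ell_\mathrm{V}$, the perpendicular part is at most $d(m_1,\ell_\mathrm{V})+d(m_2,\ell_\mathrm{V})\le \delta$, so $\tan\alpha$ is at most $\delta$ divided by the parallel separation. The crux — and the step I expect to be the main obstacle — is lower-bounding this parallel separation. Here I would use the angular construction: $P_1$ lies on $\ell$ and $P_2$ on $\ell_1$, both near distance $r$ from $c_1$, while $\ell$ and $\ell_1$ diverge by angle $\arcsin(\delta^\prime/r)$; a careful chord estimate then forces the separation of the crossing points (and hence of $m_1,m_2$, up to the $\delta$ noise) to be at least $\delta^\prime$. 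The delicate point is that this separation must be controlled along $\ell_\mathrm{V}$ rather than along $\ell$, so the orientation of the edge relative to the search ray has to be handled; unfavourable orientations are, I expect, exactly what the second term of the $\max$ is designed to absorb. This analysis yields $\alpha\le\arctan(\delta/\delta^\prime)$. Since $P_1$ is an actual boundary point, $\ell_\mathrm{V}$ and $\ell^\prime_\mathrm{V}$ intersect within the real edge, so the distance from any edge point $p$ to $\ell^\prime_\mathrm{V}$ equals (arc length from the intersection) $\cdot\sin\alpha$, which is at most $b\cdot\sin(\arctan(\delta/\delta^\prime))$ because the whole edge has length at most the bounding-box perimeter $b$.

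For the failure case (neither $\ell_1$ nor $\ell_2$ reaches $t^\prime$), I would argue that the real edge must be short. Since $t^\prime$'s region is not met along either $\ell_1$ or $\ell_2$, the portion of $\ell_\mathrm{V}$ adjacent to $t^\prime$ subtends an angle at $c_1$ smaller than $2\arcsin(\delta^\prime/r)$; at distance about $r$ this caps the edge length at roughly $2\delta^\prime$. The fallback estimate, the perpendicular bisector of $(c_3,c_4)$, passes through $m_1$, which is within $\delta/2$ of $P_1\in\ell_\mathrm{V}$, so every point of the short edge lies within $2\delta^\prime$ of the estimate. Taking the maximum of the two cases gives the claimed bound on $\epsilon$.

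Finally, the radial statement (\ref{equ:eee}) I would derive from the same estimates. When the inward intercept $v^\prime$ of the ray $(t,v)$ with $\ell^\prime_\mathrm{V}$ is closer to $t$ than $v$, the points $t,v^\prime,v$ are collinear in that order, so $d(t,v)-d(t,v^\prime)$ equals the length $|vv^\prime|$ measured along that ray; bounding this by the already-established edge error $\epsilon$ completes the proof, the only remaining subtlety being to confirm that the radial gap between the two nearly parallel lines does not exceed their perpendicular separation by more than the slack already carried inside $\epsilon$.
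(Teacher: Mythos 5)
The paper never proves this theorem: it is stated bare in Appendix~A, with no argument attached (only Theorems~1 and Lemma~1 receive proofs), so your attempt can only be judged on its own merits. On those merits, your \emph{successful-case} analysis is essentially sound, and the ``delicate point'' you flag can in fact be discharged cleanly: both crossing points $P_1,P_2$ lie on the bisector line itself, so in the triangle $c_1P_1P_2$ the sine rule gives $|P_1P_2| = r_1\sin\theta/\sin\gamma \geq r_1\sin\theta \approx \delta^\prime$ (where $\theta=\arcsin(\delta^\prime/r)$ and $\gamma$ is the angle between the bisector and the successful ray), independently of the edge's orientation; the tilt bound $\alpha \lesssim \arctan(\delta/\delta^\prime)$ and the $b\sin\alpha$ sweep then follow, up to lower-order additive slack of order $\delta$ that the theorem's statement silently ignores.

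The genuine gap is your \emph{failure case}. The inference ``the edge subtends an angle smaller than $2\arcsin(\delta^\prime/r)$ at $c_1$, hence its length is roughly capped by $2\delta^\prime$'' is false: a small subtended angle caps the length only when the edge is roughly transverse to the line of sight, whereas an edge nearly \emph{radial} to $\ell$ (crossing $\ell$ at angle $\beta < \arcsin(\delta^\prime/r)$) can stay inside the cone between $\ell_1$ and $\ell_2$ for a distance comparable to the bounding box while subtending almost no angle. This case is realizable with both searches failing: if $\beta<\theta$, the ray $\ell_1$ on the far side never meets the bisector line at all, and the ray $\ell_2$ meets it at a point $p^*$ at distance $x^*=r_1\tan\beta/(\tan\beta+\tan\theta)$ from $c_1$; one can place a third tuple $t^{\prime\prime}$ just outside the disk of radius $d(c_1,t)$ centered at $c_1$ but just inside the disk of radius $d(p^*,t)$ centered at $p^*$ (near the mirror image of $t$ across the line $c_1p^*$ --- this lens is nonempty), so that $t^{\prime\prime}$ captures $p^*$ (making $\ell_2$ exit into $t^{\prime\prime}$'s cell, hence fail) while $c_1$ still returns $t$, the crossing at $P_1$ is still a $t/t^\prime$ crossing, and the forward, nearly-radial part of the $t/t^\prime$ edge survives out to the bounding box. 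The algorithm then outputs the perpendicular bisector of $(c_3,c_4)$, which is essentially \emph{orthogonal} to the true edge, so the maximum edge error grows like the edge length (up to order $b$), not like $\max(2\delta^\prime,\, b\sin(\arctan(\delta/\delta^\prime)))$. So this is not a step you can tighten with more care: either the theorem needs an additional non-degeneracy hypothesis (the intercepted edge not nearly parallel to the search ray, or $\delta^\prime$ small relative to the local Voronoi geometry), or the failure case needs an entirely different treatment; your proposed argument as written cannot close it. The final radial claim (\ref{equ:eee}) inherits the same problem, since the ray--intercept gap exceeds the perpendicular distance by a secant factor that is unbounded precisely in this near-radial regime.
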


A simple observation from the theorem is that the binary search process can reach an arbitrary precision level - i.e., for any given upper bound on $d(p, p^\prime)$, say $d_\mathrm{U}$, we can set $\delta^\prime = d_\mathrm{U}/2$ and
\begin{align}
\delta \leq \tan\left(\arcsin\left(\frac{d_\mathrm{U}}{b}\right)\right) \cdot \frac{d_\mathrm{U}}{2}
\end{align}
to satisfy the bound. Since both $\tan$ and $\arcsin$ can be bounded from both sides by a polynomial of its input (through Taylor expansion), one can see that the corresponding query complexity is $O(\log(b/d_\mathrm{U}))$, leading to the following corollary on the maximum error edge defined in \S\ref{sec:lrlbsagg}.
\newtheorem{corollary}{Corollary}
\begin{corollary} \label{thm:ee2}
The query cost required for achieving a maximum edge error of $\epsilon$ is $O(\log(b/\epsilon))$ - i.e., $O(\log(1/\epsilon))$ when we consider the bounding box size $b$ to be constant.
\end{corollary}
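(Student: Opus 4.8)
\emph{Proof proposal.} The plan is to invert the error bound of Theorem~\ref{thm:eee} to express the required binary-search granularity $\delta$ as a function of the target edge error $\epsilon$, and then substitute this $\delta$ into the per-call query cost $3\log(b/\delta)$ established in the design of the binary search process. First I would fix the target maximum edge error $\epsilon$ and recall that Theorem~\ref{thm:eee} bounds it by $\max(2\delta^\prime, b\sin(\arctan(\delta/\delta^\prime)))$. To force this maximum to be at most $\epsilon$, it suffices to drive each of the two arguments below $\epsilon$ separately. The first argument is handled immediately by setting $\delta^\prime = \epsilon/2$, which gives $2\delta^\prime = \epsilon$.

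For the second argument, I would solve $b\sin(\arctan(\delta/\delta^\prime)) \leq \epsilon$ for $\delta$. Inverting $\sin$ and then $\arctan$ yields the constraint $\delta \leq \delta^\prime \tan(\arcsin(\epsilon/b))$, and substituting $\delta^\prime = \epsilon/2$ gives the choice
\begin{align*}
\delta = \frac{\epsilon}{2}\tan\left(\arcsin\left(\frac{\epsilon}{b}\right)\right),
\end{align*}
already anticipated in the discussion preceding the corollary. The remaining task is to confirm that this value of $\delta$ keeps $\log(b/\delta)$ within $O(\log(b/\epsilon))$.

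The key quantitative step is to control the nested trigonometric composition. Writing $x = \epsilon/b$ (which is small whenever the desired precision is finer than the bounding box), I would use the identity $\tan(\arcsin x) = x/\sqrt{1-x^2}$, so that $\delta = \frac{\epsilon}{2}\cdot\frac{\epsilon/b}{\sqrt{1-(\epsilon/b)^2}} = \Theta(\epsilon^2/b)$ for $\epsilon$ bounded away from $b$. Hence $b/\delta = \Theta(b^2/\epsilon^2)$, and $\log(b/\delta) = 2\log(b/\epsilon) + O(1)$, so the per-call cost $3\log(b/\delta)$ is $O(\log(b/\epsilon))$. Treating $b$ as a constant collapses this to $O(\log(1/\epsilon))$, completing the claim.

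I expect the main obstacle to be precisely this final bounding step: one must verify that the double inversion through $\arcsin$ and $\tan$ shrinks $\delta$ only polynomially (like $\epsilon^2$) rather than super-polynomially in $\epsilon$, since any worse dependence would inflate the logarithm beyond the claimed order. The identity $\tan(\arcsin x) = x/\sqrt{1-x^2}$ makes this transparent, but a fully rigorous argument would instead invoke two-sided Taylor bounds on $\arcsin$ and $\tan$ near $0$ — as hinted in the paper — to cover the entire admissible range of $\epsilon$ without assuming $\epsilon \ll b$.
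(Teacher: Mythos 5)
Your proposal is correct and follows essentially the same route as the paper: set $\delta^\prime = \epsilon/2$, invert the bound of Theorem~\ref{thm:eee} to obtain $\delta \leq \frac{\epsilon}{2}\tan\left(\arcsin\left(\frac{\epsilon}{b}\right)\right)$, and observe that $\delta$ shrinks only polynomially in $\epsilon$, so the per-call cost $3\log(b/\delta)$ remains $O(\log(b/\epsilon))$. Your use of the exact identity $\tan(\arcsin x) = x/\sqrt{1-x^2}$ is a slightly sharper (and unconditional, since $\tan(\arcsin x)\geq x$ for all $x\in[0,1)$) way of making the polynomial dependence explicit, where the paper merely appeals to two-sided Taylor bounds on $\tan$ and $\arcsin$.
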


\vspace{1mm}
\noindent{\bf Error Bound on Voronoi Cell Volume Estimation:} A direct corollary from Theorem~\ref{thm:eee} is an error bound on the estimated volume of a Voronoi cell. Note from our design of LNR-LBS-AGG that our estimated Voronoi cell is always a subregion of the real one. This, in combination with (\ref{equ:eee}) in Theorem~\ref{thm:eee}, leads to the following corollary.

\begin{corollary} \label{thm:ee3}
For a given tuple $t$, the ratio between the volume of the estimated Voronoi cell $V^\prime(t)$ and the real one $V(t)$ satisfies
\begin{align}
\left(\frac{d - \epsilon}{d}\right)^2 \leq \frac{|V^\prime(t)|}{|V(t)|} \leq 1
\end{align}
where $d$ is the nearest distance between $t$ and another tuple in the database, and $\epsilon$ is the maximum edge error.
\end{corollary}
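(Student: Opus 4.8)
The plan is to prove the two inequalities separately, with the right-hand bound essentially free and the left-hand bound carrying all the geometric work. For the upper bound $|V'(t)|/|V(t)| \le 1$ I would simply invoke the invariant maintained by Algorithm~\ref{alg:lnrlbsagg} (stated in the text immediately preceding the corollary) that the estimated cell is always a subregion of the real one, $V'(t) \subseteq V(t)$; monotonicity of area then gives the claim at once.

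For the lower bound I would show that the homothety (radial scaling) centered at $t$ with ratio $\rho = (d-\epsilon)/d$ maps $V(t)$ into $V'(t)$. Since both cells contain $t$ and are star-shaped with respect to it, I would describe each by its radial function $R(\theta)$ and $R'(\theta)$, the distance from $t$ to the boundary in direction $\theta$, so that $|V(t)| = \frac{1}{2}\int_0^{2\pi} R(\theta)^2\, d\theta$ and likewise for $V'(t)$. The goal then reduces to the pointwise radial-contraction bound $R'(\theta) \ge \rho\, R(\theta)$ in every direction $\theta$; squaring and integrating immediately yields $|V'(t)| \ge \rho^2 |V(t)| = \left(\frac{d-\epsilon}{d}\right)^2 |V(t)|$.

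To establish the pointwise bound I would use (\ref{equ:eee}) of Theorem~\ref{thm:eee}: along the ray from $t$ through a real vertex $v$, the estimated edge is met at a point $v'$ whose radial distance has shrunk by at most $\epsilon$, i.e.\ $d(t,v') \ge d(t,v) - \epsilon$. Combined with a lower bound $d$ on the relevant radial distances, this gives $R'/R \ge (d-\epsilon)/d = \rho$ in the direction of each vertex. I would then propagate this from the vertices to every intermediate direction: on each edge both the real and the estimated boundary are line segments, so the ratio $R'(\theta)/R(\theta)$ over a single edge is controlled by its values at the two endpoints and one checks it never drops below $\rho$. Equivalently, I would fan $V(t)$ out from $t$ into the triangles with apex $t$ and bases $[v_i, v_{i+1}]$, bound each triangle's area against its estimated counterpart by $\rho^2$ using (\ref{equ:eee}) at the two vertices, and sum.

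The main obstacle is this propagation step. Equation (\ref{equ:eee}) controls the contraction exactly only in the vertex directions, whereas the tightest radial contraction along an edge occurs in its interior (at the foot of the perpendicular from $t$), and the correspondence between real and estimated edges must be set up cleanly so the two triangulations are comparable. Pinning the constant is the delicate part: one must verify that the governing radial distances are bounded below by $d$ rather than a smaller quantity, since an off-by-factor here is precisely what separates the stated $\left(\frac{d-\epsilon}{d}\right)^2$ from a weaker bound. Convexity of the real cell together with star-shapedness of $V'(t)$ with respect to $t$ is what lets the interior control follow from the endpoint control, and I would lean on the general-position assumption from \S\ref{sec:pre} to rule out degenerate edges.
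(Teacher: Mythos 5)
Your upper bound is exactly the paper's: the algorithm maintains $V^\prime(t) \subseteq V(t)$, so monotonicity of area gives $|V^\prime(t)| \le |V(t)|$. Your lower-bound plan --- radial functions about $t$, a pointwise contraction ratio, then square and integrate --- is also the natural fleshing-out of what the paper does, since the paper's entire ``proof'' is the single assertion that the subregion property ``in combination with (\ref{equ:eee})'' yields the corollary. The genuine gap is precisely the step you defer: you note that the argument needs the governing radial distances to be bounded below by $d$, and that requirement is false. The distance from $t$ to its nearest Voronoi edge is only $d/2$: the midpoint $m$ of the segment joining $t$ to its nearest neighbor $t^*$ lies on $\partial V(t)$, because any third tuple within $d/2$ of $m$ would be within $d(t,m) + d/2 = d$ of $t$, contradicting the minimality of $d$. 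Hence $\min_\theta R(\theta)$ can equal $d/2$, and carrying out your plan honestly gives $R^\prime(\theta)/R(\theta) \ge (d/2-\epsilon)/(d/2) = (d-2\epsilon)/d$, i.e., only $|V^\prime(t)|/|V(t)| \ge \left(\frac{d-2\epsilon}{d}\right)^2$. (Your secondary worry --- propagating control from vertex directions to oblique directions, where a perpendicular offset of $\epsilon$ induces a radial offset exceeding $\epsilon$ --- is also legitimate, but it is the inradius question that decides the constant.)

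Moreover, this is not a gap you could have closed with a cleverer argument, because the printed constant appears to be unachievable: $\left(\frac{d-2\epsilon}{d}\right)^2$ is essentially sharp. Surround $t$ by $n$ tuples at distances within a tiny perturbation of $d$ (perturbed to respect the general-position assumption of \S\ref{sec:pre}); then $V(t)$ is essentially a regular $n$-gon with inradius $d/2$. An estimated cell whose every edge is the true edge translated inward by perpendicular distance $\epsilon$ is a subregion of $V(t)$, is consistent with (\ref{equ:eee}) and, for large $n$, has maximum edge error arbitrarily close to $\epsilon$; yet its area ratio is $\left(\frac{d/2-\epsilon}{d/2}\right)^2 = \left(\frac{d-2\epsilon}{d}\right)^2$, strictly below the claimed $\left(\frac{d-\epsilon}{d}\right)^2$ whenever $n \ge 4$. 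So Corollary~\ref{thm:ee3} as stated carries a factor-of-two slip in the $\epsilon$ term (and the bias bound of Theorem 2, which has the matching algebraic form, inherits it); the paper's one-sentence derivation never confronts the inradius issue, whereas your proposal at least isolates it as the crux. The statement your route actually proves is $|V^\prime(t)|/|V(t)| \ge \left(\frac{d-2\epsilon}{d}\right)^2$, with $d/2$ --- the distance from $t$ to its nearest Voronoi edge --- playing the role your plan assigned to $d$.
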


\end{document}